\newtheorem{proposition}{Proposition}[section]
\newtheorem{definition}{Definition}[section]
\newtheorem{remark}[definition]{Remark}
\newtheorem{theorem}[definition]{Theorem}
\newcommand{\lp}{\left(}
\newcommand{\rp}{\right)}
\newcommand{\lc}{\left\{}
\newcommand{\rc}{\right\}}
\newcommand{\der}{\partial}
\newcommand{\w}{\wedge}
\newcommand{\bra}{\langle}
\newcommand{\ket}{\rangle}
\newcommand{\R}{\mathds{R}}      
\newcommand{\Flder}{\rightarrow}
\newcommand{\proa}{A^*G \mbox{$\;$}_{\tau^*} \kern-3pt\times_\alpha
G \mbox{$\;$}_\beta \kern-3pt\times_{\tau^*} A^*G}
\newcommand{\al}{\mathfrak{g}}
\newcommand{\dal}{\mathfrak{g}^{*}}
\newcommand{\OmF}{\Omega_F}
\newcommand{\Fcal}{\mathcal{F}}
\newcommand{\Fcalsh}{\mathcal{F}^{\sharp}}
\newcommand{\Fcalshd}{\mathcal{F}_{\sharp}}
\newcommand{\OmTh}{\Omega_{\Theta}}
\newcommand{\Omsh}{\Omega^{\sharp}}
\newcommand{\PSB}{\mathcal{F}^{\sharp}\oplus\mathcal{F}_{\sharp}}
\newcommand{\tx}{\tilde x}
\newcommand{\BF}{\mathbf{B}}
\newcommand{\EF}{\mathbf{E}}
\newcommand{\En}{\mathcal{E}}
\newcommand{\Apo}{\mathbf{A}}
\begin{document}

\title{Hamilton-Dirac systems for charged particles in gauge fields}

\author[F. Jim\'enez]{Fernando Jim\'enez}
\address{F. Jim\'enez: Zentrum Mathematik der Technische Universit\"at M\"unchen, D-85747 Gar\-ching bei M\"unchen, Germany.} \email{fjimenez@ma.tum.de}

\thanks{This research was supported by the DFG Collaborative Research Center TRR 109, ``Discretization in Geometry and Dynamics''. 
}

\maketitle

\begin{abstract}
In this work, we use the Sternberg phase space (which may be considered as the classical phase space of particles in gauge fields) in order to explore the dynamics of such particles in the context of Hamilton-Dirac systems and their associated Hamilton-Pontryagin variational principles. For this, we develop an analogue of the Pontryagin bundle in the case of the Sternberg phase space. Moreover, we show the link of this new bundle to the so-called magnetized Tulczyjew triple, which is an analogue of the link between the Pontryagin bundle and the usual Tulczyjew triple. Taking advantage of the symplectic nature of the Sternberg space, we induce a Dirac structure on the Sternberg-Pontryagin bundle which leads to the Hamilton-Dirac structure that we are looking for. We also analyze the intrinsic and variational nature of the equations of motion of particles in gauge fields in regards of the defined new geometry. Lastly, we illustrate our theory through the case of a $U(1)$ gauge group, leading to the paradigmatic example of an electrically charged particle in an electromagnetic field.
\end{abstract}

\section{Introduction}

In the Hamiltonian formalism, many classical mechanical systems are described by a
manifold, which plays the role of phase space, endowed with a symplectic structure and
a choice of Hamiltonian function. More concretely, if $S$ is a smooth manifold equipped with a symplectic two-form $\Omega_S$, i.e. $(S,\Omega_S)$, the dynamics induced by a smooth Hamiltonian function $H:S\Flder\R$, embodied in its Hamiltonian vector field $X_H:S\Flder TS$, is determined by the well-known Hamiltonian equations
\[
\mathbf{i}_{X_H}\Omega_S=dH.
\]
As can be noticed, these equations are global and may be derived from the pure geometry of the phase space. Particularly, the dynamics of a particle with con\-fi\-gu\-ra\-tion manifold $Q$ is determined by its cotangent bundle $(T^*Q,\Omega_{T^*Q})$, the usual phase space in classical mechanics, and a given Hamiltonian function $H:T^*Q\Flder\R$. The Lagrangian counterpart of mechanics is not as {\it geometrical} as the Hamiltonian side, say the Euler-Lagrange equations for a given Lagrangian function $L:TQ\Flder\R$ cannot be obtained from the geometry of the tangent bundle $TQ$. Nevertheless, both approaches may be described intrinsically under the same framework when one combines the theory of Lagrangian submanifolds (see \cite{Weins0,Weins1}) with the so-called Tulczyjew's triple (see \cite{Tulczy3,Tulczy1,Tulczy2}): namely, both Hamiltonian and Lagrangian dynamics are described by suitable Lagrangian submanifolds of the double vector bundle $TT^*Q$. Roughly speaking, a Lagrangian submanifold is a maximally isotropic submanifold of a given symplectic manifold, while the Tulczyjew triple is the set made out of the double vector bundles $T^*T^*Q$, $TT^*Q$, $T^*TQ$ and two symplectomorphisms among them, say $\alpha_Q$, $\beta_Q$. This is a powerful mechanism and it has been widely applied in modern Geometric Mechanics, from continuous to discrete systems or from unconstrained to variationally constrained (meaning {\it vakonomically} constrained) systems, as can be seen in the recent references \cite{Yo0,Yo1/2,Cedric,Edu,Yo1,Gra}.

Mathematically speaking, in a gauge theory with gauge group $G$ formulated over a manifold $Q$, a {\it gauge field} is a connection of the $G-$principal bundle $P\Flder Q$. The addition of a gauge field into the classical particle dynamics is non-trivial, specially when the group is non-abelian. From a symplectic perspective, the  description of the phase space of a particle on a gauge field was initiated by Sternberg in \cite{Stenberg}, giving rise to the so-called {\it Sternberg phase space} $\Fcalsh$, which is a vector space. This construction follows the initial ideas in \cite{Wong}, where the equations of motion of the particle and the gauge field are obtained taking advantage of  a Poisson approach; further developments on this subject may be found in \cite{Mont, Weinstein}. From the physical point of view, the dynamics of a classical particle in interaction with a gauge field is interesting in few cases, being the paradigmatic one the case of a charged particle evolving in space and coupled to an electromagnetic field. Of course, this instance is important for its own sake, but recently some attention has been put upon the magnetized Kepler problems \cite{Bai,Meng0,Meng1}, kind of systems that fit in the setup presented in this work. On the other hand, it is mandatory to mention that gauge fields acquire crucial importance at a quantum level, for instance in Yang-Mills theories \cite{YaMi} such as the Standard Model of particle physics, which is a quantum field theory where the gauge fields play the role of the intermediate bosons of fundamental interactions (see \cite{Waco} for a theoretical perspective on the Standard Model).

Again at a classical level, to obtain the equations of motion of a charged particle subject to a gauge field is not easy, and usually it is achieved in the physical literature through the so called {\it minimal coupling} procedure (which consists on {\it shifting} the classical momenta by the gauge field). In a more elegant and geometrical way, it has been accomplished in the recent work \cite{Meng} the task of deriving these equations in the context of a generalization of the Tulczyjew triple (called the {\it magnetized} Tulczyjew triple, where the role of the cotangent bundle $T^*Q$ is played by the Sternberg phase space $\Fcalsh$) and the Lagrangian submanifold theory.

Although symplectic manifolds are the appropriate spaces to describe Hamiltonian systems and have great importance in modern mathematics, they are not suitable to describe
{\it all} classical systems. Mechanical systems with symmetries are described by Poisson
structures and systems with constraints are described by closed (but not exact, therefore {\it presymplectic})
two-forms. Systems with both symmetries and constraints are described using Dirac structures,
introduced by Courant in the early 1990s \cite{Courant}.  The original idea was to formulate the dynamics of constrained systems, including constraints
induced from degenerate Lagrangians, as in \cite{Dirac1,Dirac2}. As a matter of fact, Hamiltonian systems can be formulated in the context of Dirac
structures, and their application to electric circuits and mechanical systems with nonholonomic
constraints (namely constraints depending on the configuration and velocity variables which, moreover, are {\it not} integrable) was studied in detail in \cite{shaft}
where
they called the associated Hamiltonian systems with Dirac structures {\it implicit Hamiltonian
systems}. On the other hand, in \cite{YM,YM2} it was explored the Lagrangian side of this framework, developing the notion of {\it implicit Lagrangian system} (or {\it Hamilton-Dirac system}) as a Lagrangian analogue of implicit Hamiltonian systems. In spite of the Lagrangian naming, the dynamics of this systems is still Hamiltonian with respect to a Dirac structure. This kind of structures was designed to account for the link between Dirac structures in the cotangent bundle and a degenerate Lagrangian system with nonholonomic constraints. Moreover, the suitable space to derive their equations of motion in a variational fashion, through the {\it Hamilton-Pontryagin principle}, is the so-called {\it Pontryagin bundle} $TQ\oplus T^*Q$. Besides succeeding in the description of electric circuits and nonholonomic mechanics, the Hamilton-Dirac systems can be also applied to constrained variational dynamics as lately shown in \cite{JiYo}.
\medskip

Prior to the main results, for the sake of completeness we give a comprehensive introduction to the subject, and introduce interesting structures such as the magnetized Tulczyjew triple which allows to obtain the equations of motion under study from a geometrical condition. Then, we follow the introduced ideas  and obtain, employing the already defined Sternberg phase space and magnetized Tulczyjew triple, new geo\-me\-tri\-cal structures providing the dynamics of a charged classical particle subject to a gauge field.  Particularly, we will apply a generalized notion of Hamilton-Dirac systems to such particles.  Our formulation is general, and accounts for a non-abelian Lie group $G$. For this, we construct an analogue of the Pontryagin bundle in the case of the Sternberg phase space, which we will name as the {\it Sternberg-Pontryagin bundle}, and, furthermore, a Dirac structure there, taking advantage of a suitable presymplectic structure. Moreover, we will prove that the Sternberg-Pontryagin bundle is the appropriate space to derive variationally the equations of motion of the Hamilton-Dirac system under consideration. We put emphasis on the local properties of these geometrical structures, performing most of the computations in local coordinates. We enclose our main results in theorem \ref{Teoremaco}. The paper is structured as follows:

\S\ref{SPS} is devoted to introduce the Sternberg phase space and to carefully describe its local expression and associated symplectic two-form. In \S\ref{triples} we describe both the usual Tulczyjew triple and its magnetized version. Moreover, the equations of motion of a charged particle in a gauge field \eqref{EoM} are introduced, while they are put in the context of \cite{Meng} in proposition \ref{MainThe}. \S\ref{DiracSec} accounts for the description of Dirac structures and Hamilton-Dirac systems. We employ the Pontryagin bundle to illustrate the Hamilton-Dirac systems in proposition \ref{EjemploFacil}, result which, despite quite natural, is original to the extent of our knowledge.  In \S\ref{SPBundle} the Sternberg-Pontryagin bundle is defined and its relationship with the magnetized Tulczyjew triple shown; moreover we present the Sternberg-Pontryagin Hamilton-Dirac system. \S\ref{MT} contains our main result, split into the propositions \ref{propoVar}, \ref{propoInt} and \ref{Lag-Dirac}, where the desired equations of motion are obtained in the context of the Sternberg-Pontryagin bundle from variational, intrinsic and Dirac points of view, respectively. Finally, our theory is illustrated in \S\ref{Ex} through the paradigmatic example of an electrically charged particle in an electromagnetic field.
\medskip

Regarding the repeated indices, we will employ Einstein's
summation convention in this paper unless otherwise is stated.

\section{The Sternberg phase space}\label{SPS}

Throughout this work we assume that $Q$ is a smooth manifold, $G$ is a compact connected Lie group with Lie algebra $\al$, $\pi:P\Flder Q$ is a principal $G-$bundle with a fixed principal connection form $\Theta$, and $F$ is a Hamiltonian $G-$space with symplectic form $\OmF$ and  equivariant moment map $\Phi:F\Flder\dal$ (meaning commutative with respect to the $G-$action), where $\dal$ is the dual of the algebra. By Hamiltonian $G-$space we mean that $F$ is a symplectic manifold
with symplectic form $\OmF$, that $G$ acts on $F$ as a group of
symplectomorphisms, so that there is a homomorphism
of the Lie algebra $\al$ into the algebra of Hamiltonian vector
fields, and that we are given a lifting of this homomorphism to
a homomorphism of $\al$ into the Lie algebra of functions on $F$
(where the Lie algebra structure is given by Poisson bracket). Assuming that $Q$ is $n-$dimensional while $F$ is $m-$dimensional, we denote $(q^i)$, $i=1,...,n$, and $(z^{\alpha})$, $\alpha=1,...,m$ (with $m$ an even number since $F$ is a symplectic manifold), as their local coordinates respectively (we will use $(q,z)$ with some abuse of notation). 

Let $\Fcal:=P\times_GF$; the manifold $\Fcalsh$ is a vector bundle over $\Fcal$, making the following diagram commutative:
\begin{equation}\label{SPhaseDiag}
\begin{split}
\xymatrix{
\Fcalsh\ar[r]^{\tilde\pi_Q}\ar[d]_{\rho^{\sharp}} &\Fcal\ar[d]^{\rho}\\
T^*Q\ar[r]_{\pi_Q}&Q
}
\end{split}
\end{equation}
where $\pi_Q$ is the canonical projection. It was proven in \cite{Stenberg} that there is a correct substitute $\OmTh$ on $\Fcal$ for $\OmF$ on $F$, in the sense that it is a closed two-form on $\Fcal$ and is equal to $\OmF$ when $P\Flder Q$ is a trivial bundle with the product connection. Furthermore, if $\Omega_{T^*Q}$ is the canonical symplectic form on $T^*Q$, then
\begin{equation}\label{SterSym}
\Omsh:=\Omega_{T^*Q}+\OmTh
\end{equation}
is a symplectic two-form on $\Fcalsh$ (which we will name henceforth as the Sternberg symplectic form). For sake of simplicity, we shall use the same notation for both the differential form (or a map) and its pullback under a fiber bundle projection map (for instance, in \eqref{SterSym} both the symplectic two-form on $T^*Q$ and its pullback by $\rho^{\sharp}$ are denoted by $\Omega_{T^*Q}$, while $\Omega_{\Theta}$ denotes both a two-form on $\Fcal$ and its pullback through $\tilde\pi_Q$; therefore the sum of both two-forms makes sense).

In  order to describe these elements from the local point of view, we consider a local trivialization $\phi$ of the principal bundle $\pi:P\Flder Q$, namely  local diffeomorphisms $\phi:Q\times G\Flder P$ and $\phi_F:Q\times F\Flder\Fcal$. Then, if $(q,z)$ are local coordinates of $\Fcal$ and $(q,p)$  of $T^*Q$ (where obviously $p$ stands for $p_i$), the commutativity of diagram \eqref{SPhaseDiag} establishes $(q,p,z)$ as local coordinates of $\Fcalsh$ and the following local expression of the projections:
\begin{eqnarray*}
\tilde\pi_Q:\Fcalsh&\Flder&\Fcal;\,\,\,\,\quad\quad\quad\,\,\,\tilde\pi_Q:(q,p,z)\mapsto(q,z),\\
\rho^{\sharp}:\Fcalsh&\Flder& T^*Q;\,\,\,\,\quad\quad\rho^{\sharp}:(q,p,z)\,\,\,\mapsto(q,p),\\
\rho:\Fcal&\Flder& Q; \,\,\,\,\,\quad\quad\quad\,\,\,\rho:\,\,(q,z)\,\,\,\,\,\,\mapsto(q).
\end{eqnarray*}
Needless to say, these local projections stress the nature $\Fcalsh$ as vector bundle.

Regarding the Sternberg symplectic form, we present the needed results, and refer to \cite{Meng,Stenberg} for further details.
\begin{proposition}
There exists a closed real differential well-defined two-form $\OmTh$ on $\Fcal$ defined by $\OmTh:=\Omega_F-d\bra A,\Phi\ket$ under a local trivialization of $P\Flder Q$, where the connection $\Theta$ is represented by the $\al-$valued differential one-form $A$ on $Q$.
\end{proposition}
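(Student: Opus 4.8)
The plan is to realize $\OmTh$ as the reduction of a single globally defined two-form living on the product $P\times F$; this makes reality, closedness and well-definedness simultaneously transparent, after which the stated local formula is read off in a trivialization. First I would work on $P\times F$, which carries the diagonal $G$-action with quotient $\Fcal=P\times_G F$, and consider the real two-form
\[
\widehat\Omega:=\OmF-d\bra\Theta,\Phi\ket,
\]
where $\OmF$ and $\Phi$ are understood as pulled back from $F$ and $\Theta$ is the (global) connection one-form on $P$; the pairing $\bra\Theta,\Phi\ket$ of the $\al$-valued one-form $\Theta$ with the $\dal$-valued function $\Phi$ is a genuine real one-form on $P\times F$. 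Reality of $\OmTh$ is then immediate, since every ingredient is real, and closedness is equally cheap: $d\widehat\Omega=d\OmF-d^2\bra\Theta,\Phi\ket=0$ because $\OmF$ is symplectic and $d^2=0$, and closedness passes to the quotient.

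The substance is to show that $\widehat\Omega$ is \emph{basic} for the diagonal action, i.e. horizontal and invariant, so that it descends to a two-form $\OmTh$ on $\Fcal$. Write $\xi_{P\times F}=\xi_P+\xi_F$ for the fundamental vector field of $\xi\in\al$. Horizontality $\mathbf{i}_{\xi_{P\times F}}\widehat\Omega=0$ is the crux, and I would extract it from three facts: (i) the moment-map condition $\mathbf{i}_{\xi_F}\OmF=-d\bra\Phi,\xi\ket$ (in the sign convention under which the stated formula holds); (ii) the reproducing property of the connection $\Theta(\xi_P)=\xi$ together with $\mathbf{i}_{\xi_F}\Theta=0$, which give $\mathbf{i}_{\xi_{P\times F}}\bra\Theta,\Phi\ket=\bra\Phi,\xi\ket$; and (iii) the joint equivariance of $\Theta$ and $\Phi$ — the former via $\mathcal{L}_{\xi_P}\Theta=-[\xi,\Theta]$, the latter via the infinitesimal coadjoint equivariance of the moment map — whose contributions to $\mathcal{L}_{\xi_{P\times F}}\bra\Theta,\Phi\ket$ cancel by the defining property of the coadjoint pairing, so that $\mathcal{L}_{\xi_{P\times F}}\bra\Theta,\Phi\ket=0$. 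Cartan's formula then yields $\mathbf{i}_{\xi_{P\times F}}d\bra\Theta,\Phi\ket=\mathcal{L}_{\xi_{P\times F}}\bra\Theta,\Phi\ket-d\,\mathbf{i}_{\xi_{P\times F}}\bra\Theta,\Phi\ket=-d\bra\Phi,\xi\ket$, which exactly matches (i); hence $\mathbf{i}_{\xi_{P\times F}}\widehat\Omega=0$. Invariance is then automatic, since $\mathcal{L}_{\xi_{P\times F}}\widehat\Omega=\mathbf{i}_{\xi_{P\times F}}d\widehat\Omega+d\,\mathbf{i}_{\xi_{P\times F}}\widehat\Omega=0$ using $d\widehat\Omega=0$ and the horizontality just proved. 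Being basic, $\widehat\Omega$ descends to a well-defined closed real two-form $\OmTh$ on $\Fcal$.

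Finally I would recover the local expression. Given a trivialization with local section $s:Q\Flder P$, the induced chart $\phi_F:Q\times F\Flder\Fcal$ is the composite of $(q,z)\mapsto(s(q),z)$ with the quotient projection $P\times F\Flder\Fcal$; pulling back $\widehat\Omega$ along $(q,z)\mapsto(s(q),z)$ gives $\OmF-d\bra s^*\Theta,\Phi\ket=\OmF-d\bra A,\Phi\ket$, since $s^*\Theta=A$ is by definition the local gauge potential. This is exactly the claimed formula, and its independence of the section is guaranteed by the descent argument. I expect the only genuinely delicate point to be the sign bookkeeping in step (iii): one must fix compatible conventions for the fundamental vector fields, the moment-map normalization, and the coadjoint action so that the two Lie-derivative contributions cancel and the moment-map term in (i) aligns with the Cartan computation. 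As a cross-check one can bypass the quotient and verify well-definedness directly: under a change of trivialization by $g:Q\Flder G$ the data transform by $A\mapsto\Ad_{g^{-1}}A+g^{-1}dg$, $z\mapsto g^{-1}\cdot z$ and $\Phi\mapsto\Ad^*_{g}\Phi$, and the extra terms produced by the $q$-dependence of $g$ in $d\bra A,\Phi\ket$ cancel against those coming from the pullback of $\OmF$ precisely by the moment-map condition, confirming that $\OmF-d\bra A,\Phi\ket$ is gauge invariant.
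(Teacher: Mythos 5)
Your proposal is correct, but there is nothing in the paper to compare it with line by line: the paper does not prove this proposition at all. It states it, attributes the construction to Sternberg, and defers well-definedness to Lemma 2.1 of \cite{Meng} in the sentence immediately following the statement. What you supply is therefore a genuine, self-contained proof, and it takes the global route: realize $\OmTh$ as the descent of the form $\widehat\Omega=\OmF-d\bra\Theta,\Phi\ket$ on $P\times F$, which makes reality and closedness trivial and reduces well-definedness to showing $\widehat\Omega$ is basic (horizontal and invariant) for the diagonal action. This is essentially the Sternberg--Weinstein point of view; your closing ``cross-check'' --- verifying directly that $\OmF-d\bra A,\Phi\ket$ is unchanged under a change of trivialization $A\mapsto\Ad_{g^{-1}}A+g^{-1}dg$, $\Phi\mapsto\Ad^*_g\Phi$, with the discrepancies cancelling by the moment-map condition --- is precisely the content of the lemma the paper cites. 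So your write-up in effect contains both available arguments: the descent argument buys a trivialization-independent definition of $\OmTh$ from the outset, while the gluing argument stays elementary and local. The one point that genuinely needs care is the one you flagged. Since $\Fcal=P\times_GF$ is the quotient by $(p,z)\sim(pg,g^{-1}z)$, the $F$-component of the fundamental vector field of the diagonal action is generated by the \emph{inverted} (right) action on $F$; your formula $\xi_{P\times F}=\xi_P+\xi_F$ and your convention (i), $\mathbf{i}_{\xi_F}\OmF=-d\bra\Phi,\xi\ket$, are mutually consistent exactly under that reading (they are the standard left-action conventions with $\xi_F$ replaced by its negative). With that fixed, the cancellation in (iii) does occur: $\bra\mathcal{L}_{\xi_P}\Theta,\Phi\ket=-\bra\Phi,[\xi,\Theta]\ket$ against $\bra\Theta,\mathcal{L}_{\xi_F}\Phi\ket=+\bra\Phi,[\xi,\Theta]\ket$, so $\mathcal{L}_{\xi_{P\times F}}\bra\Theta,\Phi\ket=0$, Cartan's formula gives $\mathbf{i}_{\xi_{P\times F}}d\bra\Theta,\Phi\ket=-d\bra\Phi,\xi\ket$, and horizontality follows; the argument is then complete.
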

We point out that $A$ is the local representation of the connection $\Theta$ under trivialization. The uniqueness of $\OmTh$ is proved \cite{Meng}, lemma 2.1.  Finally, the two-form $\Omsh$ defined in \eqref{SterSym} is established as a symplectic form on $\Fcalsh$ through the following proposition:
\begin{proposition}
The differential two-form $\Omsh$ is a symplectic form on $\Fcalsh$.
\end{proposition}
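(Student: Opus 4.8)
The plan is to verify the two defining properties of a symplectic form separately: that $\Omsh$ is closed and that it is nondegenerate. Closedness is immediate, whereas nondegeneracy is where the actual work lies, and it is cleanest to handle it in the local coordinates $(q,p,z)$ introduced above. For closedness, recall that $\Omega_{T^*Q}$ is the canonical, hence exact and in particular closed, two-form on $T^*Q$, and that $\OmTh$ was shown in the previous proposition to be a closed two-form on $\Fcal$. Since the exterior derivative commutes with pullback by the bundle projections $\rho^{\sharp}$ and $\tilde\pi_Q$, both summands in \eqref{SterSym} are closed two-forms on $\Fcalsh$, and therefore $d\Omsh=d\Omega_{T^*Q}+d\OmTh=0$.

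For nondegeneracy I would assemble the matrix of $\Omsh$ in the coordinate basis $\{\partial_{q^i},\partial_{p_i},\partial_{z^{\alpha}}\}$. Writing $\Omega_{T^*Q}=dq^i\wedge dp_i$ and using the local expression $\OmTh=\OmF-d\bra A,\Phi\ket$ from the previous proposition, I would expand $d\bra A,\Phi\ket$: since $\bra A,\Phi\ket=A^a_i(q)\,\Phi_a(z)\,dq^i$, its differential splits into a purely horizontal piece proportional to $dq^j\wedge dq^i$ and a mixed piece proportional to $dz^{\alpha}\wedge dq^i$, with \emph{no} $dz\wedge dz$ contribution (the one-form carries a single $dq^i$, so $d$ produces either a $dq\wedge dq$ or a $dz\wedge dq$ term). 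Crucially, the $p$-variables appear only in $\Omega_{T^*Q}$, so they couple only to the $q$-variables and the $(q,p)$ block is the unperturbed canonical pairing. Consequently, in the ordered blocks $(q,p,z)$ the matrix of $\Omsh$ takes the form
\[
M=\begin{pmatrix} B & I_n & C\\ -I_n & 0 & 0\\ -C^{T} & 0 & J\end{pmatrix},
\]
where $J=\big((\OmF)_{\alpha\beta}\big)$ is the invertible antisymmetric component matrix of $\OmF$, the antisymmetric block $B$ collects the $dq\wedge dq$ contributions of $-d\bra A,\Phi\ket$, $C$ collects the mixed $dz$-$q$ coupling, and $I_n$ is the canonical $(q,p)$ pairing.

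The key observation is that nondegeneracy now follows from the triangular structure of $M$, independently of the coupling blocks $B$ and $C$. Indeed, if $v=(a,b,c)$ lies in $\ker M$, the $p$-block equation forces $a=0$; substituting this into the $z$-block equation gives $J\,c=0$, whence $c=0$ by invertibility of $\OmF$; and the remaining $q$-block equation then yields $b=0$. Thus $\ker M=0$ at every point, so $\Omsh$ is nondegenerate, and together with closedness this shows it is symplectic. The main obstacle is purely bookkeeping: correctly extracting the local components of $d\bra A,\Phi\ket$ and confirming the block pattern of $M$; once that is in hand, the nondegeneracy argument is essentially automatic. Finally, since nondegeneracy is a pointwise property and $\Omsh$ is a globally well-defined two-form (by the previous proposition, which guarantees $\OmTh$ is well-defined and trivialization-independent), establishing it in a single chart suffices.
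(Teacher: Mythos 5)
Your proposal is correct and follows essentially the same route as the paper: closedness of $\Omsh$ from the closedness of $\Omega_{T^*Q}$ and $\OmTh$, and nondegeneracy via the block matrix of $\Omsh$ in the coordinates $(q,p,z)$, which matches the paper's matrix \eqref{Matrix}. Your explicit kernel argument simply spells out the ``block reduction'' step that the paper leaves as an easy check.
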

\begin{proof}
It is quite easy to see that $\OmTh$ is closed, since it is composed by the already closed two-form $\Omega_F$ and a total differential. Besides, $\Omega_{T^*Q}$ is closed as a symplectic two-form, making $\Omsh$ also closed.

On the other hand, consider the local coordinates of $\Fcalsh$, $(q,p,z)$, and the local form of $\Omsh$, namely
\begin{equation}\label{OmshLocal}
\Omsh=dq^i\w dp_i+\frac{1}{2}\Omega_{\alpha\beta}\,dz^{\alpha}\w dz^{\beta}+\frac{1}{2}\bra\der_iA_j-\der_jA_i,\Phi\ket\,dq^i\w dq^j-\bra A_i,\der_{\alpha}\Phi\ket\,dq^i\w dz^{\alpha},
\end{equation} 
where $\der_i=\frac{\der}{\der q^i}$, $\der_{\alpha}=\frac{\der}{\der z^{\alpha}}$ and $\Omega_{\alpha\beta}$ is the local expression of the symplectic form $\Omega_F$ on the Hamiltonian space $F$. Employing the matrix form
\begin{equation}\label{Matrix}
\Omsh=\lp\begin{array}{ccc}
\bra\der_iA_j-\der_jA_i,\Phi\ket &\delta^i_j &-\bra A_i,\der_{\alpha}\Phi\ket\\
-\delta^i_j& 0 &0\\
\bra A_i,\der_{\alpha}\Phi\ket&0&\Omega_{\alpha\beta}
\end{array}
\rp
\end{equation}
it is easy to check that $\Omsh$ is non-degenerate everywhere by block reduction. This makes the claim hold.
\end{proof}
The symplectic manifold $(\Fcalsh,\Omsh)$ is referred as the {\it Sternberg phase space}. In \cite{Weinstein} it was introduced a symplectic space out of the principal $G-$bundle $P\Flder Q$ and the Hamiltonian $G-$space $F$, and showed that a connection $\Theta$ yields a symplectomorphism to the Sternberg phase space.
\medskip

Similarly,  $\Fcalshd$ is the dual vector bundle of $\Fcalsh$, making commutative the diagram
\begin{equation}\label{FTQ}
\begin{split}
\xymatrix{
 \Fcalshd\ar[r]^{\tilde\tau_Q}\ar[d]_{\rho_{\sharp}}&\Fcal\ar[d]^{\rho} \\
 TQ\ar[r]_{\tau_Q}&Q,
}
\end{split}
\end{equation}
where $\tau_Q:TQ\Flder Q$ is the canonical projection of the tangent bundle. Introducing local coordinates $(q,v)$ for $TQ$ (where $v$ stands for $v^i$), we may describe locally $\Fcalshd$ by $(q,v,z)$ and the projections in \eqref{FTQ} by:
\begin{eqnarray*}
\tilde\tau_Q:\Fcalshd&\Flder&\Fcal;\,\,\,\,\quad\quad\quad\,\,\,\tilde\tau_Q:(q,v,z)\mapsto(q,z),\\
\rho_{\sharp}:\Fcalshd&\Flder& TQ;\,\,\,\,\quad\quad\rho_{\sharp}:\,\,(q,v,z)\,\,\,\mapsto(q,v).\\
\end{eqnarray*}

\section{The magnetized Tulczyjew triple}\label{triples}
Taking advantage of the symplectic structure of $(\Fcalsh,$ $\Omsh)$ described in \S\ref{SPS} and the relationship between $\Fcalsh$ and $\Fcalshd$, namely they are dual vector bundles of each other, it has been elegantly introduced in \cite{Meng} an analogue of the usual Tulczyjew triple made out of these spaces, named as the {\it magnetized Tulczyjew triple}. We introduce both notions and some other useful results for our purposes.

\subsection{The Tulczyjew triple}
The spaces $TT^{*}Q$, $T^{*}TQ$ and $T^{*}T^{*}Q$ are naturally {\it double vector bundles} (see \cite{Godbillon}, \cite{Pradines}) over $T^{*}Q$ and $TQ$. In \cite{Tulczy1} and \cite{Tulczy2}, Tulczyjew established two symplectomorphisms among these spaces, the first one between $TT^{*}Q$ and $T^{*}TQ$ (namely $\alpha_Q$) and the second one between $TT^{*}Q$ and $T^{*}T^{*}Q$ (namely $\beta_Q$). As cotangent bundles, $T^*TQ$ and $T^*T^*Q$ are naturally equipped with symplectic two-forms, $\Omega_{T^*TQ}$ and $\Omega_{T^*T^*Q}$ respectively. On the other hand, it may be also proven that $TT^*Q$ is a symplectic manifold, equipped with the symplectic two-form $\Omega_{TT^*Q}:=d_T\Omega_{T^*Q}$, where $d_T\Omega_{T^*Q}$ is the tangent lift of $\Omega_{T^*Q}$, which is the usual symplectic form of the cotangent bundle $T^*Q$ (see \cite{Manolo} for more details.) In the following diagram, known as the {\it Tulczyjew triple}, we show the different relationships among these bundles:
\begin{equation}\label{TulczyTriple}
\begin{split}
\xymatrix{
T^{*}T^{*}Q\ar[dr]_{\pi_{T^*Q}}& &TT^{*}Q\ar[dr]_{T\pi_Q}\ar[dl]^{\tau_{T^*Q}}\ar[ll]_{\beta_{Q}}^{\cong}\ar[rr]^{\alpha_{Q}}_{\cong}& &T^{*}TQ\ar@/^-2pc/[llll]_{\kappa_{Q}}^{\cong}\ar[dl]^{\pi_{TQ}}\\
 &T^*Q\ar[dr]_{\pi_Q} & & TQ\ar[dl]^{\tau_Q}&\\
 & &Q & &
}
\end{split}
\end{equation}
where $\kappa_Q:=\beta_Q\circ\alpha_Q^{-1}$. 
\begin{remark}\label{remarkbundles}
{\rm We have introduced the Tulczyjew triple in terms of the canonical symplectic structures corresponding to the double vector bundles $T^*TQ,\,T^*T^*Q$. Ne\-ver\-the\-less, in a more general geometric landscape, one can always establish the isomorphism $T^*E\cong T^*E^*$, for any vector bundle $E\Flder X$, in terms of the canonical pairings \cite{GG}.}
\end{remark}
In order to show the importance of this construction in Geometric Mechanics (and also to describe the procedure employed in the next subsection to obtain geometrically the equations of motion of charged particles in gauge fields), now we briefly discuss how to describe intrinsically both Lagrangian and Hamiltonian mechanics through the Tulczyjew triple, employing as well the notion of Lagrangian submanifold. We use a rather pedestrian definition of the latter concept since a deeper analysis on this subject is not the purpose of this work. Let $(S,\Omega_S)$ be a symplectic manifold and $N\subset S$ a smooth submanifold with inclusion map $\iota$. We say that $N$ is a {\it Lagrangian submanifold} of $S$ if the following conditions hold:
\[
1)\quad \mbox{dim}\,N=\frac{1}{2}\,\mbox{dim}\,S\,\quad \,\mbox{and}\,\quad 2)\quad \iota^*\Omega_S=0.
\]
Consider a Lagrangian (Hamiltonian) function $L:TQ\Flder\R$ ($H:T^*Q\Flder\R$) generating the differential map $dL:TQ\Flder T^*TQ$ ($dH:T^*Q\Flder T^*T^*Q$). It can be proven that  $dL(TQ)\subset T^*TQ$ ($dH(T^*Q)\subset T^*T^*Q$) is a Lagrangian submanifold. Employing $\alpha_Q^{-1}$ ($\beta_Q^{-1}$) we can therefore generate a Lagrangian submanifold of $TT^*Q$ from $dL$ ($dH$), submanifold which determines a system of implicit differential equations whose integrable part can be obtained by applying the integrability constraint algorithm (see \cite{CLMM2003,MMTulczy1995} for more details). Of course, these implicit differential equations represent the Lagrangian dynamics, i.e. they are equivalent to the Euler-Lagrange equations (respectively the Hamiltonian dynamics and the usual Hamiltonian equations).

\subsection{The magnetized Tulczyjew triple}
In the next diagram, in analogy to \eqref{TulczyTriple}, we introduce the {\it magnetized Tulczyjew triple} (see \cite{Meng} for more details): 

\begin{equation}\label{MagTriple}
\begin{split}
\xymatrix{
T^{*}\Fcalsh\ar[dr]_{\pi_{\Fcalsh}}& &T\Fcalsh\ar[dr]_{T_{\Fcal}}\ar[dl]^{\tau_{\Fcalsh}}\ar[ll]_{\beta_{\Fcal}}^{\cong}\ar[rr]^{\alpha_{\Fcal}}_{\cong}& &T^{*}\Fcalshd\ar@/^-2pc/[llll]_{\kappa_{\Fcal}}^{\cong}\ar[dl]^{\pi_{\Fcalshd}}\\
 &\Fcalsh\ar[dr]_{\tilde\pi_Q} & & \Fcalshd\ar[dl]^{\tilde\tau_Q}&\\
 & & \Fcal& &
}
\end{split}
\end{equation}
where the projection $T_{\Fcal}$, for coordinates $(q,p,z,\dot q,\dot p,\dot z)$ of $T\Fcalsh$, is locally defined by $T_{\Fcal}:(q,p,z,\dot q,\dot p,\dot z)\mapsto (q,\dot q,z)$. The symplectic structures on $T^*\Fcalsh$ and $T^*\Fcalshd$ are provided by their cotangent structure ($\Omega_{T^*\Fcalsh}$ and $\Omega_{T^*\Fcalshd}$ respectively), while $\Omega_{T\Fcalsh}$ is defined from $\Omega_{\Theta}$, this is $\Omega_{T\Fcalsh}:=d_T\Omega_{\Theta}$, where again $d_T$  represents the tangent lift.  Thus, the symplectic nature of $\Fcalsh$ allows to establish the isomorphism $\alpha_{\Fcal}$, i.e. $T\Fcalsh \cong T^*\Fcalshd$, while $\kappa_{\Fcal}$, i.e. $T^*\Fcalsh \cong T^*\Fcalshd$, may be established due to the vector bundle nature of $\Fcalsh$, $\Fcalshd$ and their cotangent bundles,  as pointed out in remark \ref{remarkbundles}. Finally, taking advantage again of the vector bundle nature of all the considered spaces, $\beta_{\Fcal}:=\alpha_{\Fcal}\circ\kappa_{\Fcal}$ completes the diagram.

This triple is used in \cite{Meng} in order to obtain the {\it equations of motion of a charged particle in the presence of gauge fields}. For this, in analogy with how the Lagrangian dynamics is obtained employing the usual Tulczyjew triple, a smooth Lagrangian submanifold of $(T\Fcalsh,\Omega_{T\Fcalsh})$ is considered, in particular the submanifold {\it generated} by a Lagrangian function $L_{\sharp}:\Fcalshd\Flder\R$ through the following diagram:
\[
\xymatrix{
\Fcal & & \R\ar[ddrr]^{\tilde c}\ar[ll]_{c}\ar[ddll]^{(c,\dot{\rho\circ c})} &\\\\
\Fcalshd\ar[uu]^{\tilde\tau_Q}\ar[rr]_{dL_{\sharp}}& &T^*\Fcalshd\ar[rr]^{\alpha_{\Fcal}^{-1}}_{\cong} & &T\Fcalsh
}
\]
Let $c:\R\Flder\Fcal$ be a parametrized curve on $\Fcal$ and let $(c(t),\frac{d}{dt}(\rho\circ c(t)))$, where $\rho$ is the projection map defined in diagrams \eqref{SPhaseDiag} and \eqref{FTQ}, be the {\it lifted} curve to $\Fcalshd$ (note that the local coordinates of $(c(t),\frac{d}{dt}(\rho\circ c(t)))$ may be considered, with some abuse of notation, $(q(t),\dot q(t),z(t))$). Now, considering the differential map $dL_{\sharp}:\Fcalshd\Flder T^*\Fcalshd$, we employ the magnetized Tulczyjew triple  \eqref{MagTriple} to obtain the Lagrangian submanifold $\alpha_{\Fcal}^{-1}\lp dL_{\sharp}((c(t),\frac{d}{dt}(\rho\circ c(t))))\rp$.
Finally, taking into account the local expression of $\alpha_{\Fcal}$, say

\[
\begin{split}
\alpha_{\Fcal}:(q,p,z,\dot q,\dot p,\dot z)\mapsto &(q,\dot q,z,\dot p_i-\bra\dot q^j(\der_jA_i-\der_iA_j),\Phi\ket-\bra A_i,\dot z^{\alpha}\der_{\alpha}\Phi\ket, \\& \,\,\,\,p_i,\dot z^{\beta}\Omega_{\beta\alpha}+\bra \dot q^iA_i,\der_{\alpha}\Phi\ket),
\end{split}
\]
and furthermore $\alpha_{\Fcal}^{-1}\lp dL_{\sharp}(\Fcalshd)\rp$, one arrives at the equations of motion:
\begin{equation}\label{EoM}
\begin{split}
\frac{d}{dt} z^{\alpha}&=\Omega^{\alpha\beta}\lp\frac{\der L_{\sharp}}{\der z^{\beta}}-\bra\dot q^kA_k,\,\frac{\der\Phi}{\der z^{\beta}}\ket\rp,\\\\
\frac{d}{dt}\lp\frac{\der L_{\sharp}}{\der\dot q^i}\rp&=\frac{\der L_{\sharp}}{\der q^i}+\dot q^j\bra\frac{\der A_i}{\der q^j}-\frac{\der A_j}{\der q^i},\Phi\ket+\dot z^{\alpha}\bra A_i,\frac{\der\Phi}{\der z^{\alpha}}\ket,
\end{split}
\end{equation}
where $\lp\Omega^{\alpha\beta}\rp=\lp\Omega_{\alpha\beta}\rp^{-1}$ exists, since $\Omega_F$ is full-rank.

\begin{remark}\label{remarc}
{\rm
As it is well known, the Darboux's theorem ensures that, for any point in $F$, there exists an open neighborhood in which the local coordinates $z^{\alpha}$ may be split into $z^{\alpha}=(z^{a},z^{\bar a})$, where $a,\bar a=1,...,m/2$, such that $\Omega_F=\Omega_{\alpha\beta}\,dz^{\alpha}\wedge dz^{\beta}=\delta_{a\bar a}\,dz^{a}\wedge dz^{\bar a}$, where $\delta_{a\bar a}$ is the usual Kronecker delta. Using this particular local representation, the equations \eqref{EoM} read
\begin{equation}\label{EoMDar}
\begin{split}
\frac{d}{dt} z^{a}&=-\delta^{a\bar a}\lp\frac{\der L_{\sharp}}{\der z^{\bar a}}-\bra\dot q^kA_k,\,\frac{\der\Phi}{\der z^{\bar a}}\ket\rp,\\\\
\frac{d}{dt} z^{\bar a}&=\,\,\,\delta^{\bar a a}\lp\frac{\der L_{\sharp}}{\der z^{a}}-\bra\dot q^kA_k,\,\frac{\der\Phi}{\der z^{a}}\ket\rp,\\\\
\frac{d}{dt}\lp\frac{\der L_{\sharp}}{\der\dot q^i}\rp&=\frac{\der L_{\sharp}}{\der q^i}+\dot q^j\bra\frac{\der A_i}{\der q^j}-\frac{\der A_j}{\der q^i},\Phi\ket+\dot z^{\alpha}\bra A_i,\frac{\der\Phi}{\der z^{\alpha}}\ket,
\end{split}
\end{equation}
where, in the last equation, $\alpha=(a,\bar a)$ and $\delta^{a\bar a}$ is the inverse of $\delta_{a\bar a}$. In general, we shall use equally the expressions \eqref{EoM} and \eqref{EoMDar}, preferring the latter in some proofs for convenience.
}
\end{remark}
As shown by this procedure, the equations above may be obtained from a geo\-me\-tri\-cal condition. On the other hand, they can be obtained by usual calculus of variations (as mentioned, but not proved, in \cite{Meng}). We enclose this result in the following proposition, which must be understood as a rephrasing of part of the {\it main theorem} in \cite{Meng}:
\begin{proposition}\label{MainThe}
Let $L_{\sharp}:\Fcalshd\Flder\R$ be a smooth Lagrangian function and $\tilde c:\R\Flder T\Fcalshd$ a smooth curve. For a charged particle with configuration space $Q$, internal space $F$, gauge field $\Theta$ and Lagrangian $L_{\sharp}$, its equations of motion are locally written as \eqref{EoM}-\eqref{EoMDar}, equations that can be obtained from the next two statements (which are equivalent):
\begin{enumerate}
\item $\tilde c(t)\in \alpha_{\Fcal}^{-1}\lp dL_{\sharp}(\Fcalshd))\rp$,
\item[]
\item let $\mathcal{L}_{\sharp}$ be an extended Lagrangian defined by
\begin{equation}\label{ExLag}
\mathcal{L}_{\sharp}:=L_{\sharp}-\bra\dot q^iA_i,\Phi\ket+\Omega_F(z,\dot z),
\end{equation}
where we set $\Omega_F(z,\dot z):=\Omega_F(\dot z^{a}\frac{\der}{\der z^{a}},z^{\bar a}\frac{\der}{\der z^{\bar a}})=\delta_{a\bar a}\dot z^az^{\bar a}$. Then, the stationary condition for the action functional
\[
\int_{t_1}^{t_2}\mathcal{L}_{\sharp}((q(t),z(t),\dot q(t),\dot z(t))\,dt
\]
where the endpoints of $(q(t),z(t))$ are fixed, singles out a curve obeying the equations \eqref{EoM}-\eqref{EoMDar}.
\end{enumerate}
\end{proposition}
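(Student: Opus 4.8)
The plan is to prove both implications in local coordinates and then read off their equivalence from the fact that each reproduces the same system \eqref{EoM}-\eqref{EoMDar}. Throughout I would work on the lifted curve, where the fibre coordinate $v$ of $\Fcalshd$ is identified with $\dot q$, so that $L_{\sharp}$ is regarded as a function of $(q,\dot q,z)$ and, crucially, carries no dependence on $\dot z$; this last point is what makes the extended Lagrangian \eqref{ExLag} genuinely first order in $\dot z$ through the term $\Omega_F(z,\dot z)$ alone.

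For statement (1), I would unpack the inclusion $\tilde c(t)\in\alpha_{\Fcal}^{-1}\lp dL_{\sharp}(\Fcalshd)\rp$ as the pointwise identity $\alpha_{\Fcal}(\tilde c(t))=dL_{\sharp}(q,\dot q,z)$. Comparing the three fibre blocks of $T^*\Fcalshd$ --- the momenta conjugate to $q$, to $v$ and to $z$ --- against the given local form of $\alpha_{\Fcal}$ and of $dL_{\sharp}=(q,v,z,\der L_{\sharp}/\der q^i,\der L_{\sharp}/\der v^i,\der L_{\sharp}/\der z^{\alpha})$ yields three relations: the block conjugate to $v$ gives $p_i=\der L_{\sharp}/\der\dot q^i$; substituting this into the block conjugate to $q$ and using $\dot p_i=\tfrac{d}{dt}(\der L_{\sharp}/\der\dot q^i)$ gives the second equation of \eqref{EoM}; and the block conjugate to $z$, after inverting $\Omega_{\alpha\beta}$ (which exists since $\Omega_F$ is full rank), gives the first. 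This recovers \eqref{EoM} and is essentially the computation sketched before the statement.

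For statement (2), I would compute the Euler-Lagrange equations of $\mathcal{L}_{\sharp}$ with respect to the independent fields $q$ and $z$, the endpoints of $(q,z)$ being fixed so that all boundary terms vanish. The $q$-equation follows from $\der\mathcal{L}_{\sharp}/\der\dot q^i=\der L_{\sharp}/\der\dot q^i-\bra A_i,\Phi\ket$ together with $\tfrac{d}{dt}\bra A_i,\Phi\ket=\dot q^j\bra\der_jA_i,\Phi\ket+\dot z^{\alpha}\bra A_i,\der_{\alpha}\Phi\ket$ (using that $A_i=A_i(q)$ and $\Phi=\Phi(z)$), and after cancellation of the $\der_iA_j$ terms reproduces exactly the last equation of \eqref{EoM}-\eqref{EoMDar}. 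For the $z$-equations I would pass to the Darboux chart of remark \ref{remarc}, where $\Omega_F(z,\dot z)=\delta_{a\bar a}\dot z^az^{\bar a}$; since this term is linear and asymmetric, the variations in $z^{a}$ and in $z^{\bar a}$ produce two first-order equations, which are precisely the first two lines of \eqref{EoMDar}.

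Finally, the equivalence of (1) and (2) is immediate once both are shown to be locally equivalent to \eqref{EoM}-\eqref{EoMDar} on every chart. The step I expect to be most delicate is the treatment of the term $\Omega_F(z,\dot z)$: it is deliberately written in the asymmetric ``half'' form $\delta_{a\bar a}\dot z^az^{\bar a}$ rather than symmetrically, and one must check that its two separate Euler-Lagrange equations (for $z^a$ and for $z^{\bar a}$) together encode the single symplectic relation obtained in (1) by inverting $\Omega_{\alpha\beta}$. This is exactly why passing to Darboux coordinates, as licensed by remark \ref{remarc}, is the natural device; verifying that the signs $\pm\delta^{a\bar a}$ emerge correctly and match the inversion of $\Omega_{\beta\alpha}$ in statement (1) is where the bookkeeping must be carried out with care.
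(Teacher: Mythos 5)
Your proposal is correct: the block-by-block comparison of $\alpha_{\Fcal}(\tilde c)$ with $dL_{\sharp}$ for statement (1), and the Euler--Lagrange computation for $\mathcal{L}_{\sharp}$ in Darboux coordinates (with the asymmetric term $\delta_{a\bar a}\dot z^a z^{\bar a}$ producing the two first-order $z$-equations) for statement (2), both reproduce \eqref{EoM}--\eqref{EoMDar} exactly as claimed. This is essentially the paper's own route: the paper gives no standalone proof of this proposition (it is stated as a rephrasing of the main theorem of \cite{Meng}), but your argument for (1) coincides with the derivation via the local expression of $\alpha_{\Fcal}$ given immediately before the proposition, and your argument for (2) is precisely the variational computation the paper carries out later in the proof of proposition \ref{propoVar}, stripped of the extra Pontryagin terms $\bra p,\dot q-v\ket$.
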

Note that the extended Lagrangian $\mathcal{L}_{\sharp}$ is degenerate on $T\Fcalshd$, i.e. if we define the function $\mathcal{L}_{\sharp}:T\Fcalshd\Flder\R$ it is easy to see that $\frac{\der\mathcal{L}_{\sharp}}{\der\dot v}=0$ using the local coordinates $(q,v,z,\dot q,\dot v,\dot z)$ for $T\Fcalshd$. Our task in the subsequent sections, which is the main purpose of this paper, is to reobtain the equations \eqref{EoM}-\eqref{EoMDar} from a new variational principle and a Hamilton-Dirac condition in the Sternberg-Pontryagin bundle, which will be introduced in \S\ref{SPBundle}.

\section{Dirac structures and Hamilton-Dirac systems}\label{DiracSec}

\subsection{Dirac Structures} We first recall the definition of a {\it Dirac structure on a vector space} $V$, say finite dimensional for simplicity (see \cite{Courant} and \cite{CoWe1988}). Let $V^{\ast}$ be the dual space of $V$, and $\langle\cdot \, , \cdot\rangle$
be the natural pairing between $V^{\ast}$ and $V$. Define the
symmetric pairing
$\langle \! \langle\cdot,\cdot \rangle \!  \rangle$
on $V \oplus V^{\ast}$ by
\begin{equation*}
\langle \! \langle\, (v,\alpha),
(\bar{v},\bar{\alpha}) \,\rangle \!  \rangle
=\langle \alpha, \bar{v} \rangle
+\langle \bar{\alpha}, v \rangle,
\end{equation*}
for $(v,\alpha), (\bar{v},\bar{\alpha}) \in V \oplus V^{\ast}$.
A {\it Dirac structure} on $V$ is a subspace $D \subset V \oplus
V^{\ast}$ such that
$D=D^{\perp}$, where $D^{\perp}$ is the orthogonal
of $D$ relative to the pairing
$\langle \! \langle \cdot,\cdot \rangle \!  \rangle$.

Now let $M$ be a smooth manifold and let $TM \oplus T^{\ast}M$ denote the Whitney sum bundle over $M$, namely, the bundle over the base $M$ and with fiber over the point $x \in M $ equal to $T_xM \times T_x^{\ast}M$. In this paper, we shall call a  subbundle $ D_M \subset TM \oplus T^{\ast}M$ a {\it Dirac structure on the manifold $M$}, or  a {\it Dirac structure on the bundle} $\tau_M:TM \to M$, when $D_M(x)$ is a Dirac structure on the vector space $T_{x}M$ at each point $x \in M$. A given two-form $\omega$ on $M$ together with a distribution $\Delta_{M}$ on $M$ determines a Dirac structure on $M$ as follows:
\begin{proposition}\label{DProof}
The two-form $\omega$ determines a Dirac structure $D_M$ on $M$ whose fiber is given for each $x\in M$ as
\begin{equation}\label{DiracManifold}
\begin{split}
D_M(x)=\{ (v_{x}, \alpha_{x}) \in T_{x}M \times T^{\ast}_{x}M
  \; \mid \; & v_{x} \in \Delta_{M}(x), \; \mbox{and} \\ 
  & \alpha_{x}(w_{x})=\omega_{\Delta_{M}}(v_{x},w_{x}) \; \;
\mbox{for all} \; \; w_{x} \in \Delta_{M}(x) \},
\end{split}
\end{equation}
where $\Delta_M\subset TM$ and  $\omega_{\Delta_{M}}$ is the restriction of $\omega$ to $\Delta_{M}$.
\end{proposition}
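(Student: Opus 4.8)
The plan is to verify that the subspace $D_M(x)\subset T_xM\oplus T_x^*M$ defined by \eqref{DiracManifold} satisfies the defining condition $D_M(x)=D_M(x)^\perp$ at each point $x\in M$, working purely with the linear algebra of the single vector space $V:=T_xM$ equipped with the symmetric pairing $\langle \! \langle\cdot,\cdot\rangle \! \rangle$. Recall that this pairing is nondegenerate, so that for any subspace $D\subset V\oplus V^*$ one has $\dim D+\dim D^\perp=\dim(V\oplus V^*)=2n$ with $n=\dim V$. Consequently it suffices to establish two things: (i) that $D_M(x)$ is isotropic, i.e. $D_M(x)\subseteq D_M(x)^\perp$; and (ii) that $\dim D_M(x)=n$. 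These together force $\dim D_M(x)^\perp=2n-n=n$ and hence, combined with isotropy, the equality $D_M(x)=D_M(x)^\perp$.

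For the isotropy in step (i), I would take two arbitrary elements $(v,\alpha),(\bar v,\bar\alpha)\in D_M(x)$ and compute the pairing directly. By definition both $v,\bar v\in\Delta_M(x)$, so the defining relation in \eqref{DiracManifold} may be evaluated on each other, giving $\alpha(\bar v)=\omega_{\Delta_M}(v,\bar v)$ and $\bar\alpha(v)=\omega_{\Delta_M}(\bar v,v)$. Substituting into
\[
\langle \! \langle\, (v,\alpha),(\bar v,\bar\alpha)\,\rangle \! \rangle=\langle\alpha,\bar v\rangle+\langle\bar\alpha,v\rangle=\omega_{\Delta_M}(v,\bar v)+\omega_{\Delta_M}(\bar v,v),
\]
the two terms cancel by the skew-symmetry of the two-form $\omega$. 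Hence every pair of elements of $D_M(x)$ is orthogonal, establishing $D_M(x)\subseteq D_M(x)^\perp$.

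For the dimension count in step (ii), I would analyze the natural projection $\mathrm{pr}:D_M(x)\to\Delta_M(x)$, $(v,\alpha)\mapsto v$. This map is surjective: given $v\in\Delta_M(x)$, the linear functional $w\mapsto\omega_{\Delta_M}(v,w)$ defined on the subspace $\Delta_M(x)$ can always be extended to a covector $\alpha\in T_x^*M$, producing an element $(v,\alpha)\in D_M(x)$. Its kernel consists of pairs $(0,\alpha)$ with $\alpha|_{\Delta_M(x)}=0$, i.e. $\alpha$ ranges over the annihilator $\Delta_M(x)^\circ$, whose dimension is $n-\dim\Delta_M(x)$. The rank–nullity theorem then yields $\dim D_M(x)=\dim\Delta_M(x)+\bigl(n-\dim\Delta_M(x)\bigr)=n$, as required. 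Combining (i) and (ii) gives $D_M(x)=D_M(x)^\perp$ pointwise, and since this holds at every $x$ (with $\Delta_M$ of locally constant rank and $\omega$ smooth, so that the fibers assemble into a smooth subbundle of $TM\oplus T^*M$) we conclude that $D_M$ is a Dirac structure on $M$.

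The computation is entirely routine; the one place demanding slight care — and the conceptual heart of the statement — is that the defining relation $\alpha_x(w_x)=\omega_{\Delta_M}(v_x,w_x)$ is imposed only for $w_x\in\Delta_M(x)$, not for all of $T_xM$. It is precisely this restriction that gives the projection $\mathrm{pr}$ a nontrivial kernel, namely the annihilator $\Delta_M(x)^\circ$, and thereby produces the correct fiber dimension $n$: had one instead demanded the relation on all tangent vectors, $D_M(x)$ would have dimension $\dim\Delta_M(x)$ and would in general fail to be maximally isotropic. Thus the only subtlety to watch is that the extension of $\omega_{\Delta_M}(v,\cdot)$ to a full covector be carried out without imposing spurious constraints, which is always possible since any linear functional on a subspace extends to the whole space.
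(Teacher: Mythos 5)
Your proof is correct: establishing isotropy from skew-symmetry of $\omega$ and then forcing equality $D_M(x)=D_M(x)^{\perp}$ by the dimension count $\dim D_M(x)=\dim\Delta_M(x)+\bigl(n-\dim\Delta_M(x)\bigr)=n$, via nondegeneracy of $\langle\!\langle\cdot,\cdot\rangle\!\rangle$, is exactly the standard argument. Note that the paper itself supplies no proof here --- it defers to \cite{YM} --- and the proof in that reference proceeds along essentially the same lines as yours, so there is nothing to reconcile; your closing observation that the relation is imposed only on $w_x\in\Delta_M(x)$, making the kernel of the projection the annihilator $\Delta_M(x)^{\circ}$, is indeed the one point where the construction could silently go wrong, and you have handled it correctly.
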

We refer to \cite{YM} for the proof.

Of course, this proposition is also valid when $\Delta_M=TM\,\, (\omega_{\Delta_M}=\omega)$, which is the case in this work since we do not consider restricted systems, and, furthermore, either for pre-symplectic or symplectic two-forms since the key property to accomplish the result is their skew-symmetry. On the other hand, throughout this work we shall define the Dirac structures in a different but equivalent way to proposition \ref{DProof}. Namely, each two-form $\omega$ on $M$ defines a bundle map $\omega^{\flat}:TM\Flder T^*M$ by
$\omega^{\flat}(v) =\omega(v,\cdot)$.
Consequently, we may equivalently define $D_M(x)$ in \eqref{DiracManifold} as
\begin{equation*}
\begin{split}
D_M(x)=\{ (v_{x}, \alpha_{x}) \in T_{x}M \times T^{\ast}_{x}M
  \; \mid \;  v_{x} \in \Delta_{M}(x), \; \mbox{and} \;
   \alpha_{x}-\omega^{\flat}(x)(v_{x}) \in \Delta^{\circ}_{M}(x) \;
 \},
 \end{split}
\end{equation*}
or in other words $D_M(x):=\mbox{graph}\,\lp\omega^{\flat}\rp\big|_x$.

\subsection{Hamilton-Dirac systems}
As shown just above, the Dirac structures can be given by the graph of the
bundle map associated with the canonical symplectic structure, and hence
it naturally provides a geometric setting for Hamiltonian mechanics. On the other hand, as mentioned in the introduction, the Dirac systems are also useful in the Lagrangian side when one considers degenerate Lagrangian functions and restricted systems \cite{JiYo,YM,YM2}.

Based on the ideas of these references, we next present a rather general definition of a {\it Hamilton-Dirac dynamical system} and its equations of motion; afterwards, we give a significative example.

\begin{definition}\label{Lag-Dirac-Sys}
Consider a Dirac structure $D_M$ on $M$, a curve $x: \R\Flder M$ and the exterior differential $d\gamma:M\Flder T^*M$, where $\gamma:M\Flder\R$ is a smooth function. We define the {\rm Hamilton-Dirac dynamical system} induced by the Dirac structure $D_M$ and the curve $\gamma$ as the pair $(D_M,\gamma)$. Its equations of motion	are given by 
\[
(\dot x(t)\,,\,d\gamma(x(t)))\in D_M(x(t)).
\]
Any curve $x(t)\subset M$, $t_1\leq t\leq t_2$ satisfying this condition is called a {\rm solution curve} of the Hamilton-Dirac system.
\end{definition}

\begin{remark}
{\rm
The systems introduced in the last definition are also called} implicit Lagrangian systems {\rm or} Lagrange-Dirac systems {\rm in references as \cite{JiYo, YM,YM2} in order to emphasize that, in the cases treated, the function $\gamma$ is a Lagrangian function or a generalized Energy function, as also is the case in this paper. However, we state the} Hamiltonian {\rm naming, since the defined dynamics is Hamiltonian with respect to a Dirac structure.}
\end{remark}

We illustrate the Hamilton-Dirac systems by means of the Pontryagin bundle $TQ\oplus T^*Q$ over a manifold $Q$, that is the Whitney sum of the tangent bundle and the cotangent bundle over $Q$, whose fiber at $q\in Q$ is the product $T_qQ\times T_q^*Q$. The Pontryagin bundle is locally described by $(q,v,p)$, and these three projections are naturally defined:
\begin{eqnarray*}
\mbox{pr}_{TQ}&:&TQ\oplus T^*Q\Flder TQ;\,\,\,\,\,\,\,(q,v,p)\mapsto(q,v),\\
\mbox{pr}_{T^*Q}&:&TQ\oplus T^*Q\Flder T^*Q;\,\,\,\,(q,v,p)\mapsto(q,p),\\
\mbox{pr}_{Q}&:&TQ\oplus T^*Q\Flder Q;\,\,\,\,\,\,\,\,\,\,\,(q,v,p)\mapsto(q).
\end{eqnarray*}
The Pontryagin bundle and its projections {\it fits} in the Tulczyjew triple \eqref{TulczyTriple} as in the next diagram:
\begin{small}
\[
\xymatrix{
T^{*}T^{*}Q\ar[dr]_{\pi_{T^*Q}}& &&TT^{*}Q\ar[drr]^{T\pi_Q}\ar[dll]_{\tau_{T^*Q}}\ar[lll]_{\beta_{Q}}^{\cong}\ar[rrr]^{\alpha_{Q}}_{\cong}&& &T^{*}TQ\ar@/^-2pc/[llllll]_{\kappa_{Q}}^{\cong}\ar[dl]^{\pi_{TQ}}\\
 &T^*Q\ar[ddrr]_{\pi_Q}& &TQ\oplus T^*Q\ar[ll]_{\mbox{pr}_{T^*Q}}\ar[rr]^{\mbox{pr}_{TQ}}\ar[dd]^{\mbox{pr}_{Q}}& & TQ\ar[ddll]^{\tau_Q}&\\\\
 & &&Q && &
}
\]
\end{small}
Consider now the presymplectic two-form $\Omega_{T^*Q}$ on $T^*Q\oplus TQ$ (where we denote by $\Omega_{T^*Q}$ its pullback under the projection $\mbox{pr}_{T^*Q}$). Thus, employing the proposition \eqref{DProof}, we can define the Dirac structure
\begin{align*}
D_{PB}(y)
& =\{ (v_{y}, \alpha_{y}) \in T_{y}(TQ\oplus T^{\ast}Q) \times
T^{\ast}_{y}(TQ\oplus T^{\ast}Q)  \mid v_{y} \in
T_{y}(TQ\oplus T^{\ast}Q),\\ \mbox{and} \nonumber
 & \qquad \qquad
\alpha_{y}(w_{y}) = \Omega_{T^*Q}(y) (v_{y},w_{y}) \;\; \mbox{for
all} \;\; w_{y} \in T_{y}(TQ\oplus T^{\ast}Q)\},
\end{align*}
where $y=(q,v,p)\in TQ\oplus T^*Q$, or in the simpler form $D_{PB}(y)=\mbox{graph}\,\lp\Omega_{T^*Q}\rp^{\flat}\big|_y$. Given a Lagrangian $L:TQ\Flder\R$ (possibly degenerate) and its associated generalized energy $E_L:TQ\oplus T^*Q\Flder\R$, $E_L:=\bra p,v\ket-L(q,v)$, according to definition \eqref{Lag-Dirac-Sys} we can state the following proposition: 

\begin{proposition}\label{EjemploFacil}
The equations of motion of the Hamilton-Dirac system $(D_{PB},E_L)$ are locally given for each $y=(q,v,p)\in TQ\oplus T^*Q$ by
\begin{equation}\label{LD1}
\left( (\dot{q},\dot v,\dot{p}), d E_L(q,v,p)\right)  \in D_{PB}(q,v,p).
\end{equation}
These equations are equivalent to the usual Euler-Lagrange equations.
\end{proposition}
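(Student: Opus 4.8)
The plan is to reduce the intrinsic membership condition \eqref{LD1} to a coordinate comparison, exploiting that $D_{PB}$ is given explicitly as $\mbox{graph}\,\lp\Omega_{T^*Q}\rp^{\flat}$, so that $(X,\alpha)\in D_{PB}(y)$ holds if and only if $\alpha=\lp\Omega_{T^*Q}\rp^{\flat}(X)$ as one-forms (the annihilator term vanishes because we take $\Delta_M=T(TQ\oplus T^*Q)$). First I would write down the flat map. In the coordinates $(q,v,p)$ the pulled-back presymplectic form is $\Omega_{T^*Q}=dq^i\w dp_i$, carrying no $dv$ terms, so for a tangent vector $X=(\dot q,\dot v,\dot p)$ one computes $\lp\Omega_{T^*Q}\rp^{\flat}(X)=\dot q^i\,dp_i-\dot p_i\,dq^i$. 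The crucial structural feature is that this one-form has \emph{no} $dv^i$ component, a direct consequence of the degeneracy of the pulled-back form.

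Next I would compute the differential of the generalized energy. From $E_L=\bra p,v\ket-L(q,v)=p_iv^i-L(q,v)$ one gets
\[
dE_L=-\frac{\der L}{\der q^i}\,dq^i+\lp p_i-\frac{\der L}{\der v^i}\rp dv^i+v^i\,dp_i.
\]
Imposing the graph condition $dE_L=\lp\Omega_{T^*Q}\rp^{\flat}(\dot q,\dot v,\dot p)$ and comparing the coefficients of $dq^i$, $dv^i$ and $dp_i$ separately then yields the three relations $\dot p_i=\der L/\der q^i$, $p_i=\der L/\der v^i$, and $v^i=\dot q^i$, respectively. These are precisely the implicit Euler--Lagrange (Lagrange--Dirac) equations along the solution curve $(q(t),v(t),p(t))$.

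The step that carries the conceptual weight, and which I would flag as the heart of the argument rather than a routine computation, is the matching of the $dv^i$ coefficients: since the right-hand side has no $dv$ component, the condition forces $p_i-\der L/\der v^i=0$, i.e.\ the primary (Legendre) constraint $p_i=\der L/\der v^i$. This is exactly where the degeneracy of the presymplectic two-form does the work that a genuine symplectic form could not, and it is the reason the Dirac formalism is the natural home for possibly degenerate $L$. Finally, to close the equivalence with the usual Euler--Lagrange equations, I would substitute the kinematic relation $\dot q^i=v^i$ and the constraint $p_i=\der L/\der v^i$ into $\dot p_i=\der L/\der q^i$, obtaining $\frac{d}{dt}\lp\der L/\der\dot q^i\rp=\der L/\der q^i$; the converse direction is immediate by reading the same three relations as definitions of $v$ and $p$. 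I do not anticipate any obstacle beyond bookkeeping once the flat map is written out; the only point demanding care is keeping the $dv$ block visible so that the constraint is not silently lost.
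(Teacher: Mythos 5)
Your proposal is correct and follows essentially the same route as the paper: write the Dirac structure locally as the graph of $\lp\Omega_{T^*Q}\rp^{\flat}$, compute $dE_L$ in coordinates, and match the $dq$, $dv$, $dp$ components to obtain $\dot p_i=\der L/\der q^i$, $p_i=\der L/\der v^i$, $\dot q^i=v^i$, hence the Euler--Lagrange equations. Your explicit emphasis on the vanishing $dv$ block as the source of the Legendre constraint is a nice clarification of what the paper states implicitly via the condition $\beta=0$, but it is the same argument.
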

\begin{proof}
The Dirac structure $D_{PB}\subset T(TQ\oplus T^*Q)\oplus T^*(TQ\oplus T^*Q)$ is locally defined by
\[
D_{PB}(y)=\lc\lp(\dot q,\dot v, \dot p),(\alpha,\beta,u)\rp\,|\,-\dot p=\alpha,\,0=\beta,\,\dot q=u\rc,
\]
where $\alpha_idq^i+\beta_idv^i+u^idp_i\in T^*(TQ\oplus T^*Q).$ Setting $(\alpha,\beta,u)=dE_L$, we arrive at
$\alpha=-\frac{\partial L}{\partial q}$, $\beta=p -\frac{\partial L}{\partial v }$ and $u=v$, and, therefore, at the coordinate equations of motion of the Hamilton-Dirac system
\[
\dot{p}= \frac{\partial L}{\partial q},\quad p -\frac{\partial L}{\partial v }=0, \quad \dot{q} =v,
\]
which are, after a straightforward computation, the usual Euler-Lagrange equations of a Lagrangian system, namely 
\[
\frac{d}{dt}\lp\frac{\der L}{\der\dot q}\rp=\frac{\der L}{\der q}.
\]
\end{proof}
From the variational point of view, it is easy to prove, employing usual calculus of variations, that these equations can be also obtained from the stationary condition of the action functional
\[
\int_{t_1}^{t_2}\left[\bra p(t),\dot q(t)\ket-E_L(q(t),v(t),p(t))\right]\,dt
\]
with fixed endpoints of $q(t)$. This is known as the {\it Hamilton-Pontryagin principle}.

\section{The Sternberg-Pontryagin bundle and Sternberg-Pontryagin Hamilton-Dirac system}\label{SPBundle}

We use the spaces $\Fcalsh$ and $\Fcalshd$ defined in \S\ref{SPS} in order to introduce, in analogy to the usual Pontryagin bundle $TQ\oplus T^*Q$, what we define as the {\it Sternberg-Pontryagin bundle}.
\begin{definition}
Consider the bundle $\PSB$ over $\Fcal$, whose fiber at $(q,z)\in\Fcal$ is the product $\Fcalsh\times_{(q,z)}{\Fcalshd}$. We call the bundle $\PSB$ the {\rm Sternberg-Pontryagin bundle}.
\end{definition}
Under this definition, the local coordinates of $\PSB$ are written 
\[
(q,v,p,z),
\]
while the following three projections are naturally defined:
\begin{eqnarray*}
\mbox{pr}_{\Fcalsh}&:&\PSB\Flder\Fcalsh;\,\,\,\,(q,v,p,z)\mapsto(q,p,z),\\
\mbox{pr}_{\Fcalshd}&:&\PSB\Flder\Fcalshd;\,\,\,\,(q,v,p,z)\mapsto(q,v,z),\\
\mbox{pr}_{\Fcal}&:&\PSB\Flder\Fcal;\,\,\,\,\,\,(q,v,p,z)\mapsto(q,z).
\end{eqnarray*}
All the previous developments may be summarized into the following diagram, where \eqref{SPhaseDiag} and \eqref{FTQ} have been taken into account and, also, we show how the Sternberg-Pontryagin bundle $\PSB$ {\it fits} in the magnetized Tulczyjew triple \eqref{TulczyTriple}:
\begin{small}
\begin{equation}\label{Diag2}
\begin{split}
\xymatrix{
 T^*\Fcalsh\ar[dr]_{\pi_{\Fcalsh}}\ar@/^+2pc/[rrrrrr]^{\kappa_{\Fcal}} & & & T\Fcalsh\ar[dll]^{\tau_{\Fcalsh}}\ar[drr]_{T_{\Fcal}}\ar[lll]_{\beta_{\Fcal}}\ar[rrr]^{\alpha_{\Fcal}} & & &T^*\Fcalshd\ar[dl]^{\pi_{\Fcalshd}}\\
         & \Fcalsh\ar[ddd]_{\rho^{\sharp}}\ar[ddrr]_{\tilde\pi_Q}& &\PSB\ar[dd]^{\mbox{pr}_{\Fcal}}\ar[rr]_{\mbox{pr}_{\Fcalshd}}\ar[ll]^{\mbox{pr}_{\Fcalsh}} & & \Fcalshd\ar[ddll]^{\tilde\tau_Q}\ar[ddd]^{\rho_{\sharp}}&\\\\
         & & & \Fcal\ar[d]^{\rho}& & & \\
         &T^*Q\ar[rr]_{\pi_Q} & &    Q  & &TQ\ar[ll]^{\tau_Q} &
}
\end{split}
\end{equation}
\end{small}
Taking advantage of the projection $\mbox{pr}_{\Fcalsh}:\PSB\Flder\Fcalsh$, we can induce a presymplectic two-form in the Pontryagin-Sternberg bundle $\PSB$, namely $\lp\mbox{pr}_{\Fcalsh}\rp^*\Omsh$ (which we will also denote $\Omsh$). Furthermore, this two-form induces the bundle map
\[
\lp\Omsh\rp^{\flat}:T(\PSB)\Flder T^*(\PSB),
\]
and consequently, according to proposition \ref{DProof}, the Dirac structure
\begin{equation*}
\begin{split}
D^{\sharp}(x)=\mbox{graph}\,\lp\Omsh\rp^{\flat}\big|_x,
\end{split}
\end{equation*}
where $x=(q,v,p,z)\in\PSB$. We name $D^{\sharp}$ the {\it Pontryagin-Sternberg Dirac structure}. On the other hand, consider a Lagrangian function (possibly degenerate) $L_{\sharp}:\Fcalshd\Flder\R$ and define its associated generalized Energy function $E_{L_{\sharp}}:\PSB\Flder\R$ in local coordinates by
\begin{equation}\label{GenEner}
E_{L_{\sharp}}(q,v,p,z):=\bra p,\,v\ket-L_{\sharp}(q,v,z),
\end{equation}
where $\bra\cdot,\cdot\ket$ denotes the natural pairing between $TQ$ and $T^*Q$. With all these ingredients and according to definition \ref{Lag-Dirac-Sys} we introduce the following Hamilton-Dirac system:
\begin{definition}\label{Hamilton-Dirac-System}
Consider the Dirac structure $D^{\sharp}$ on $\PSB$, a Lagrangian function (possibly degenerate) $L_{\sharp}:\Fcalshd\Flder\R$, its associated generalized Energy function $E_{L_{\sharp}}:\PSB\Flder\R$ \eqref{GenEner} and a curve $x(t)=(q(t),v(t),p(t),z(t))\in\PSB$. We define the {\rm Pontryagin-Sternberg Hamilton-Dirac system} by $(D^{\sharp},E_{L_{\sharp}})$ and its equations of motion by
\begin{equation}
\lp\dot x(t),dE_{L_{\sharp}}(x(t))\rp\in D^{\sharp}(x(t)).
\end{equation}
\end{definition}

\section{Main Theorem}\label{MT}
In this section we split our main result into three propositions, enclosing them in a compact way in the final theorem. The two statements in proposition \eqref{MainThe} might be also included (since they are all equivalent) but we prefer to keep them out in order to emphasize the new results. 
\medskip

First we establish a variational principle providing the equations of motion of a charged particle in a gauge field \eqref{EoM}. For that, we present some useful definitions. As above, let $x=(q,v,p,z)$ be local coordinates of $\PSB$; therefore $(x,\dot x)=(q,v,p,z,\dot q,\dot v,\dot p,\dot z)$ are the local coordinates of $T(\PSB)$. Furthermore, consider $r=(q,v,z)$ local coordinates for $\Fcalshd$ and therefore $(r,\dot r)=(q,v,z,\dot q,\dot v,\dot z)$ for $T\Fcalshd$. Define the {\it extended generalized Energy function} $E_{\mathcal{L}_{\sharp}}:T(\PSB)\Flder\R$, locally given by
\begin{equation}\label{ExEnergy}
E_{\mathcal{L}_{\sharp}}(x,\dot x):=\bra p\,,\,v\ket-\mathcal{L}_{\sharp}(r,\dot r),
\end{equation}
 where $\mathcal{L}_{\sharp}$ is the extended Lagrangian defined in \eqref{ExLag}. Note that $E_{\mathcal{L}_{\sharp}}$ is also degenerate by definition due to its $(\dot v,\dot p)-$independence.
\begin{proposition}\label{propoVar}
Let $\mathcal{L}_{\sharp}:T\Fcalshd\Flder\R$ be a degenerate Lagrangian function defined by \eqref{ExLag} and $E_{\mathcal{L}_{\sharp}}:T(\PSB)\Flder\R$ the degenerate extended generalized energy in \eqref{ExEnergy}. Define the action functional
\begin{equation}\label{AcFunc}
\begin{split}
&\int_{t_1}^{t_2}\left[\bra p(t),\dot q(t)-v(t)\ket+\mathcal{L}_{\sharp}(r(t),\dot r(t))\right]dt\\
&=\int_{t_1}^{t_2}\left[\bra p(t),\dot q(t)\ket-E_{\mathcal{L}_{\sharp}}(x(t),\dot x(t))\right]\,dt.
\end{split}
\end{equation}
Then, keeping the endpoints of $(q(t),z(t))\in\Fcal$ fixed, whereas the endpoints of $v(t)$ and $p(t)$ are allowed to be free, the stationary condition for this action functional induces the equations \eqref{EoM}-\eqref{EoMDar}.
\end{proposition}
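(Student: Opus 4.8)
The plan is to read \eqref{AcFunc} as an ordinary first-order variational problem in the independent fields $q^i(t)$, $v^i(t)$, $p_i(t)$ and $z^\alpha(t)$, and apply standard calculus of variations. First I would write the integrand out explicitly using the definition \eqref{ExLag} of $\mathcal{L}_{\sharp}$, obtaining
\[
\ell = p_i\lp\dot q^i - v^i\rp + L_{\sharp}(q,v,z) - \dot q^i\bra A_i,\Phi\ket + \Omega_F(z,\dot z),
\]
where $A_i = A_i(q)$ and $\Phi = \Phi(z)$. The crucial structural observation, already recorded after \eqref{ExLag}, is that $\ell$ contains no $\dot v$ and no $\dot p$: hence the Euler--Lagrange equations coming from the variations $\delta v$ and $\delta p$ are purely algebraic and produce no boundary terms in $v,p$. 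This is precisely why the endpoints of $v(t)$ and $p(t)$ may be left free while only those of $(q,z)\in\Fcal$ need be fixed.

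Next I would carry out the four variations in turn. The variation $\delta p$ gives at once the constraint $\dot q^i = v^i$, and $\delta v$ gives the Legendre-type relation $p_i = \der L_{\sharp}/\der v^i$. For $\delta q$ I would integrate the $\der\ell/\der\dot q^j$ contribution by parts; since $\delta q$ vanishes at the endpoints the boundary term drops, leaving
\[
\frac{d}{dt}\lp p_j - \bra A_j,\Phi\ket\rp = \frac{\der L_{\sharp}}{\der q^j} - \dot q^i\bra\der_j A_i,\Phi\ket.
\]
Expanding $\frac{d}{dt}\bra A_j,\Phi\ket = \dot q^i\bra\der_i A_j,\Phi\ket + \dot z^\alpha\bra A_j,\der_\alpha\Phi\ket$, and then substituting the previous two relations (so that $\dot p_j = \frac{d}{dt}(\der L_{\sharp}/\der\dot q^j)$ and $v=\dot q$ along solutions), collapses this to exactly the second equation of \eqref{EoM}, with the antisymmetric combination $\bra\der_i A_j - \der_j A_i,\Phi\ket$ of the gauge potential furnishing the magnetic/curvature term.

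For $\delta z$ I would pass to the Darboux coordinates of remark \ref{remarc}, in which the presymplectic term reads $\Omega_F(z,\dot z) = \delta_{a\bar a}\dot z^a z^{\bar a}$. Because this term is first order and asymmetric in the split $z^\alpha = (z^a,z^{\bar a})$, the variations in $z^a$ and in $z^{\bar a}$ are not symmetric: the former contributes a $\dot z$ through $\der\ell/\der\dot z^a$, the latter a $\dot z$ through $\der\ell/\der z^{\bar a}$, with opposite signs. Working these out (again with $\delta z = 0$ at the endpoints, so that the associated boundary terms vanish) reproduces the two first-order equations for $\dot z^a$ and $\dot z^{\bar a}$ in \eqref{EoMDar}, which are the Darboux form of the first equation of \eqref{EoM}. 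Collecting the outputs of all four variations and eliminating $v$ and $p$ by the constraint and the Legendre relation yields precisely \eqref{EoM}--\eqref{EoMDar}.

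I expect the main delicacy to be the bookkeeping of the $\delta z$ variation rather than any conceptual obstacle: one must correctly attribute the first-order term $\delta_{a\bar a}\dot z^a z^{\bar a}$ to ``velocity'' versus ``position'' contributions in the two halves of the Darboux split, keep track of the resulting sign, and check that the fixed-endpoint condition on $z$ annihilates exactly the boundary terms generated by this coupling. A secondary point to state carefully is that the relation $p_i = \der L_{\sharp}/\der v^i$ together with $v^i = \dot q^i$ is what legitimizes replacing $\dot p_j$ by $\frac{d}{dt}(\der L_{\sharp}/\der\dot q^j)$ along solutions, so that the degenerate, first-order system on $\PSB$ is genuinely equivalent to the equations of motion \eqref{EoM}--\eqref{EoMDar}.
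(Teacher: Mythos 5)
Your proposal is correct and follows essentially the same route as the paper's proof: a direct first-order variation of the action in the four independent fields $(q,v,p,z)$, integration by parts on the $\dot q$ and $\dot z^a$ terms, with the fixed endpoints of $(q,z)$ killing the boundary contributions and the absence of $\dot v,\dot p$ in the integrand explaining why $v,p$ may have free endpoints. Your field-by-field Euler--Lagrange organization, the expansion of $\frac{d}{dt}\bra A_j,\Phi\ket$, and the asymmetric treatment of the Darboux split $(z^a,z^{\bar a})$ all reproduce the paper's computation, and you even make explicit the elimination step (via $\dot q = v$, $p = \der L_{\sharp}/\der v$) that the paper leaves as ``obvious.''
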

\begin{proof}
By direct computations, the variation of \eqref{AcFunc} reads
\[
\begin{split}
\delta\int_{t_1}^{t_2}\left[\bra p,\dot q-v\ket+\mathcal{L}_{\sharp}(r,\dot r)\right]dt&=\int_{t_1}^{t_2}\left[\bra\delta p,\dot q\ket+\bra p,\delta\dot q\ket-\bra\delta p,v\ket-\bra p,\delta v\ket\right.\\&\left. 
+\left<\frac{\der L_{\sharp}}{\der q},\delta q\right>+\left<\frac{\der L_{\sharp}}{\der v},\delta v\right>+\left<\frac{\der L_{\sharp}}{\der z},\delta z\right>\right.\\&\left.
-\left<\delta\dot q^iA_i,\Phi\right>-\left<\dot q^i\der_jA_i\delta q^i,\Phi\right>-\left<\dot q^iA_i,\der_{\alpha}\Phi\delta z^{\alpha}\right>\right.\\&\left.
+\delta_{a\bar a}\,z^{\bar a}\,\delta \dot z^{a}+\delta_{a\bar a}\,\dot z^{a}\,\delta z^{\bar a}\right]dt,
\end{split}
\]
where the particular form of \eqref{ExLag}, i.e. $\mathcal{L}_{\sharp}=L_{\sharp}-\bra\dot q^iA_i,\Phi\ket+\Omega_F(z,\dot z)$, has been taken into account (note in the last two terms the difference between the Kronecker's delta and the variation of the coordinates) as long with the splitting of coordinates $\alpha=(a,\bar a)$. Moreover, in the first four terms  $\bra\cdot,\cdot\ket$ means the pairing between $T^*Q$ and $T^*Q$, in the next three ones the pairing between $T^*\Fcalshd$ and $T\Fcalshd$ and, finally, in the next three ones the pairing between $\al^*$ and $\al$. Now, reordering the terms and performing integration by parts we arrive at
\[
\begin{split}
\delta\int_{t_1}^{t_2}\left[\bra p,\dot q-v\ket+\mathcal{L}_{\sharp}(r,\dot r)\right]dt&=\int_{t_1}^{t_2}\left[ \left<\delta p,\dot q-v\right>+\left<\frac{\der L_{\sharp}}{\der v}-p,\delta v\right>\right.\\&\left.
+\left<-\dot p_i+\frac{\der L_{\sharp}}{\der q^i}+\dot q^j\bra\frac{\der A_i}{\der q^j}-\frac{\der A_j}{\der q^i},\Phi\ket+\bra A_i,\dot z^{\alpha}\der_{\alpha}\Phi\ket\,,\,\delta q^i\right>\right.\\
&\left.+\left<\frac{\der L_{\sharp}}{\der z^{a}}-\bra\dot q^iA_i,\der_{a}\Phi\ket-\delta_{a\bar a}\dot z^{\bar a}\,,\,\delta z^{a}\right>\right]dt\\
&\left.+\left<\frac{\der L_{\sharp}}{\der z^{\bar a}}-\bra\dot q^iA_i,\der_{\bar a}\Phi\ket+\delta_{a\bar a}\dot z^{a}\,,\,\delta z^{\bar a}\right>\right]dt\\
&+\bra p,\delta q\ket\big|_{t_1}^{t_2}-\delta q^i\bra A_i,\Phi\ket\big|_{t_1}^{t_2}+\delta_{a\bar a}\,\delta z^{a}\,z^{\bar a}\big|_{t_1}^{t_2},
\end{split}
\]
where we have used that 
\[
\begin{split}
\int_{t_1}^{t_2}\delta\dot q^i\,\bra A_i,\Phi\ket\,dt&=\delta q^i\bra A_i,\Phi\ket\big|_{t_1}^{t_2}-\int_{t_1}^{t_2}\delta q^i\frac{d}{dt}\bra A_ i,\Phi\ket\,dt\\
&=\delta q^i\bra A_i,\Phi\ket\big|_{t_1}^{t_2}-\int_{t_1}^{t_2}\delta q^i\,\left[\bra\dot q^j\der_jA_ i,\Phi\ket+\bra A_ i,\dot z^{\alpha}\der_{\alpha}\Phi\ket\right]\,dt
\end{split}
\]
under integration by parts. 

Now, taking into account that $\delta q(t_1)=\delta q(t_2)=\delta z(t_1)=\delta z(t_2)=0$ the last three terms vanish. Moreover, considering that $(\delta q,\delta v,\delta p,\delta z)$ are free, the stationary condition above provides the following equations.
\begin{equation}\label{eqconp}
\begin{split}
\dot q=&v,\\
p=&\frac{\der L_{\sharp}}{\der v},\\
\dot p_i=&\frac{\der L_{\sharp}}{\der q^i}+\dot q^j\bra\frac{\der A_i}{\der q^j}-\frac{\der A_j}{\der q^i},\Phi\ket+\bra A_i,\dot z^{\alpha}\der_{\alpha}\Phi\ket,\\
\dot z^{a}=&-\delta^{a\bar a}\lp\frac{\der L_{\sharp}}{\der z^{\bar a}}-\bra\dot q^iA_i,\der_{\bar a}\Phi\ket\rp,\\
\dot z^{\bar a}=&\,\,\,\,\,\delta^{\bar a a}\lp\frac{\der L_{\sharp}}{\der z^{a}}-\bra\dot q^iA_i,\der_{a}\Phi\ket\rp.
\end{split}
\end{equation}
These are obviously the equations \eqref{EoMDar} as claimed.
\end{proof}

Now, taking advantage of the geometry introduced in the diagram  \eqref{Diag2}, we attempt to obtain an intrinsic expressions of the action functional \eqref{AcFunc} and the equations \eqref{EoM}. As a first guess, considering \eqref{Diag2} we notice that the Poincar\'e-Cartan one form $\Theta_{T^*Q}$ on $T^*Q$ (with local form $\Theta_{T^*Q}=p_i dq^i$) can be pulled-back to $T(\PSB)$ through the chain
\[
\xymatrix{
T(\PSB)\ar[r]^{\tau_{\PSB}} &\PSB\ar[r]^{\mbox{pr}_{\Fcalsh}} & \Fcalsh\ar[r]^{\rho^{\sharp}} & T^*Q,
}
\]
where $\tau_{\PSB}:T(\PSB)\Flder\PSB$ is obviously the canonical tangent projection, inducing a one-form $\lp\rho^{\sharp}\circ\mbox{pr}_{\Fcalsh}\circ\tau_{\PSB}\rp^*\,\Theta_{T^*Q}$ on $T(\PSB)$ (denoted $\Theta_{T^*Q}$ as well). Denoting $\tx=(x,\dot x)\in T(\PSB)$, the action functional
\[
\int_{t_1}^{t_2}\left[\bra\Theta_{T^*Q}(\tx(t)),\dot{\tx}(t)\ket-E_{\mathcal{L}_{\sharp}}(\tx(t))\right]\,dt
\]
is a fair global expression of \eqref{AcFunc}, fact that can be easily proven by direct computations in coordinates. Nevertheless, taking variations and integrating by parts we arrive at 
\[
\int_{t_1}^{t_2}\left[\bra-\mathbf{i}_{\dot{\tx}(t)}d\Theta_{T^*Q}(\tx(t))-dE_{\mathcal{L}_{\sharp}}(\tx(t))\,,\,\delta\,{\tx}(t)\ket\right]\,dt+\bra\Theta_{T^*Q}(\tx(t)),\delta\,{\tx}(t)\ket\big|_{t_1}^{t_2}=0,
\]
which fixing the endpoints of $q(t)$ yields $\mathbf{i}_{\dot{\tx}(t)}\Omega_{T^*Q}(\tx(t))=dE_{\mathcal{L}_{\sharp}}(\tx(t))$, with $\Omega_{T^*Q}=-d\Theta_{T^*Q}$. After some calculations, we realize that this is {\it not} a global representation of \eqref{EoM} (we skip the details for sake of short). This fact points out that the usual symplectic geometry, pulled-back to the new space $\PSB$, is not enough to describe the equations of a charged particle in a gauge field. Consequently, we reorient our attention to the Sternberg-Pontryagin bundle in order to construct a meaningful one-form there. Indeed, noting that the connection $A$ is a $\dal-$valued one-form on $T^*Q$ and appealing to the considerations in remark \ref{remarc} we define (using the Darboux's coordinates $(z^a,z^{\bar a})$ for $F$):
\begin{equation}\label{One-Form}
\Theta^{\sharp}:=\lp p_i-\bra A_i,\Phi\ket\rp\,dq^i+z^{\bar a}dz^a,
\end{equation} 
where the one-form in $F$, i.e. $z^{\bar a}dz^a=:\Theta_F$, is defined such that $-d\Theta_F=\Omega_F=\delta_{a\bar a}dz^a\wedge dz^{\bar a}$. Taking into account that $\Omega^{\sharp}=\Omega_{T^*Q}-d\bra A,\Phi\ket+\Omega_F$, it is easy to check that 
$\Omega^{\sharp}=-d\Theta^{\sharp}$. Pulling-back $\Theta^{\sharp}$ to $\PSB$ through $\mbox{pr}_{\Fcalsh}$ (note that $\Omsh$ will be presymplectic in $\PSB$) and taking into account the generalized energy \eqref{GenEner} we define the action functional
\begin{equation}\label{AcFuncCorr}
\int_{t_1}^{t_2}\left[\bra\Theta^{\sharp}(x(t))\,,\,\dot x(t)\ket-E_{L_{\sharp}}(x(t))\right]\,dt,
\end{equation}
where again $x=(q,v,p,z)\in\PSB$, which is as well a fair global expression of \eqref{AcFunc}, as can be easily proven by direct computations in coordinates. We show in the next proposition that this new action functional provides also a global representation of \eqref{EoM}.
\begin{proposition}\label{propoInt}
Under the endpoints {\rm $(q(t),z(t))=\mbox{pr}_{\Fcal}(x(t))$} fixed, the stationary condition of the action functional \eqref{AcFuncCorr} singles out a critical curve $x(t)$ that satisfies the {\rm intrinsic equations of motion of a charged particle in a gauge field}:
\[
\mathbf{i}_{\dot x(t)}\Omega^{\sharp}(x(t))=dE_{L_{\sharp}}(x(t)).
\]
Moreover, these equations are equivalent to \eqref{EoM}.
\end{proposition}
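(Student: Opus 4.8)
The plan is to establish the statement in two stages: first, that the stationarity of \eqref{AcFuncCorr} is equivalent to the intrinsic equation $\mathbf{i}_{\dot x}\Omsh=dE_{L_{\sharp}}$; second, that this intrinsic equation reproduces \eqref{eqconp}, hence \eqref{EoM}, by a local coordinate matching.

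For the first stage I would run exactly the integration-by-parts computation already carried out above for the one-form $\Theta_{T^*Q}$, but now applied to $\Theta^\sharp$. Taking variations of $\int_{t_1}^{t_2}\lp\bra\Theta^\sharp(x),\dot x\ket-E_{L_{\sharp}}(x)\rp\,dt$ and integrating by parts produces
\[
\int_{t_1}^{t_2}\bra-\mathbf{i}_{\dot x}d\Theta^\sharp-dE_{L_{\sharp}}\,,\,\delta x\ket\,dt+\bra\Theta^\sharp(x),\delta x\ket\big|_{t_1}^{t_2}.
\]
The decisive point is that, by \eqref{One-Form}, $\Theta^\sharp$ contains only $dq^i$ and $dz^a$, so the boundary term equals $\lp p_i-\bra A_i,\Phi\ket\rp\delta q^i+z^{\bar a}\delta z^a$ evaluated at the endpoints; this vanishes precisely under the prescribed conditions (fixing $(q,z)=\mbox{pr}_{\Fcal}(x)$ while leaving $v$ and $p$ free). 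Since $\Omsh=-d\Theta^\sharp$, the vanishing of the bulk integrand for arbitrary $\delta x$ then yields $\mathbf{i}_{\dot x}\Omsh=dE_{L_{\sharp}}$.

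For the second stage I would substitute the Darboux local expression of $\Omsh$ read off from \eqref{OmshLocal} (pulled back to $\PSB$ through $\mbox{pr}_{\Fcalsh}$, so that it carries no $dv$ component) and contract it with $\dot x=(\dot q,\dot v,\dot p,\dot z)$. On the right-hand side, $dE_{L_{\sharp}}$ is computed from \eqref{GenEner}. Matching coefficients of the independent one-forms gives: from $dp_i$ the relation $\dot q=v$; from $dv^i$ the Legendre relation $p=\der L_{\sharp}/\der v$; from $dq^i$ the force law carrying the magnetic term $\bra\der_iA_j-\der_jA_i,\Phi\ket$; and from $dz^a$, $dz^{\bar a}$ the internal equations carrying the coupling $\bra A_i,\der_\alpha\Phi\ket$. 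These are exactly the five equations \eqref{eqconp}, i.e. \eqref{EoMDar}, which by remark \ref{remarc} are equivalent to \eqref{EoM}.

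I expect the main subtlety to lie in the degeneracy of $\Omsh$ on $\PSB$: since the pulled-back form has no $dv$ component, the contraction $\mathbf{i}_{\dot x}\Omsh$ produces no $dv^i$ term, so matching against the $dv^i$ part of $dE_{L_{\sharp}}$ forces the primary constraint $p=\der L_{\sharp}/\der v$ rather than determining $\dot v$; one must verify that this is a genuine (consistent) constraint and that it is exactly what turns the implicit intrinsic equation into the explicit Euler-Lagrange form of \eqref{EoM}. The remaining difficulty is purely careful sign and index bookkeeping in the interior product, in particular isolating the $dz^a$ versus $dz^{\bar a}$ parts of the term $-\bra A_i,\der_\alpha\Phi\ket\,dq^i\w dz^\alpha$ and matching them against $\der L_{\sharp}/\der z^{\bar a}$ and $\der L_{\sharp}/\der z^a$ with the signs dictated by $\Omega_F=\delta_{a\bar a}\,dz^a\w dz^{\bar a}$.
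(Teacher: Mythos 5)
Your proposal is correct and follows essentially the same two-stage route as the paper's own proof: a variation-plus-integration-by-parts argument (with the boundary term killed by fixing the endpoints of $(q,z)$, which works precisely because $\Theta^{\sharp}$ has only $dq^i$ and $dz^a$ components) yielding $\mathbf{i}_{\dot x}\Omsh=dE_{L_{\sharp}}$, followed by contracting the local matrix form of $\Omsh$ on $\PSB$ with $\dot x$ and matching against $dE_{L_{\sharp}}$ to recover \eqref{eqconp}. Your observation that the missing $dv$-column of the pulled-back $\Omsh$ forces the primary constraint $p=\der L_{\sharp}/\der v$ is exactly what happens in the paper's coordinate computation.
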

\begin{proof}
To prove the first statement, we take variations over \eqref{AcFuncCorr}, which yields:
\[
\begin{split}
&\delta\int_{t_1}^{t_2}\left[\bra\Theta^{\sharp}(x(t))\,,\,\dot x(t)\ket-E_{L_{\sharp}}(x(t))\right]\,dt\\
&=\int_{t_1}^{t_2}\left[\bra-\mathbf{i}_{\dot{x}(t)}d\Theta^{\sharp}(x(t))-dE_{L_{\sharp}}(x(t))\,,\,\delta\,{x}(t)\ket\right]\,dt+\bra\Theta^{\sharp}(x(t)),\delta\,{x}(t)\ket\big|_{t_1}^{t_2}=0,
\end{split}
\]
where integration by parts has been performed. For all variations $\delta x(t)$ and fixed endpoints $(q(t),z(t))=\mbox{pr}_{\Fcal}(x(t))$, one arrives straightforwardly at $\mathbf{i}_{\dot x(t)}\Omega^{\sharp}(x(t))=dE_{L_{\sharp}}(x(t))$.

To prove the second, we consider the local form of $\Omsh$ on $\PSB$, particularly (recall \eqref{Matrix})
\begin{equation}\label{OmegaSharp}
\Omsh=\lp\begin{array}{cccc}
\bra\der_iA_j-\der_jA_i,\Phi\ket& 0&\delta^i_j &-\bra A_i,\der_{\alpha}\Phi\ket\\
 0&0 &0 &0 \\
-\delta^i_j&0 & 0 &0\\
\bra A_i,\der_{\alpha}\Phi\ket&0 &0&\Omega_{\alpha\beta}
\end{array}
\rp,
\end{equation}
which leads to
\[
\begin{split}
\mathbf{i}_{\dot x}\Omega^{\sharp}(x)=&\left<\dot q^j\bra\der_iA_j-\der_jA_i,\Phi\ket-\dot p_i+\dot z^{\alpha}\bra A_i,\der_{\alpha}\Phi\ket,\,dq^i\right>\\
&+\left<dp,\dot q\right>+\left<-\dot q^i\bra A_i,\der_{\alpha}\Phi\ket+\Omega_{\alpha\beta}\dot z^{\beta},dz^{\alpha}\right>.
\end{split}
\]
On the other hand
\begin{equation}\label{dELsharp}
\begin{split}
dE_{L_{\sharp}}=&\left<\frac{\der E_{L_{\sharp}}}{\der q},dq\right>+\left<\frac{\der E_{L_{\sharp}}}{\der v},dv\right>+\left<dp,\frac{\der E_{L_{\sharp}}}{\der p}\right>+\left<\frac{\der E_{L_{\sharp}}}{\der z},dz\right>\\
=&\left<-\frac{\der L_{\sharp}}{\der q},dq\right>+\left<p-\frac{\der L_{\sharp}}{\der v},dv\right>+\left<dp,v\right>+\left<-\frac{\der L_{\sharp}}{\der z},dz\right>
\end{split}
\end{equation}
Equating both expressions we arrive at equations \eqref{eqconp}, and therefore the claim holds.
\end{proof}
\begin{remark}\label{MinimalCoupling}
{\rm
Roughly speaking, in the definition of the one-form $\Theta^{\sharp}$ we have performed a sort of {\it minimal coupling} condition: namely we have established the substitution $p_i\Flder p_i-\bra A_i,\Phi\ket$, where $p_i$ are the coordinates of the momentum in $T^*Q$. The minimal coupling is the standard procedure in the physics literature to derive the Lorentz equations in a relativistically invariant manner. More concretely, the substitution $p\Flder p-eA$ is made in the  Hamiltonian function (where $p$ is the four-momentum and $A$ is a four-potential of the electromagnetic field, while $e$ is the electric charge). As observed in \cite{Sni,So}, this procedure is equivalent to leaving the Hamiltonian invariant and adding $e\,dA$ to the symplectic form in the original phase space. This is the beginning point by Sternberg himself when constructing the Sternberg's phase space in \cite{Stenberg}.
}
\end{remark}
By means of this proposition we have proven that the suitable space to intrinsically describe the equations of motion of a charged particle in a gauge field is the Pontryagin-Sternberg bundle $\PSB$.

Finally, we employ the Pontryagin-Sternberg Hamilton-Dirac system to reobtain \eqref{EoM}.
\begin{proposition}\label{Lag-Dirac}
Consider the Pontryagin-Sternberg Hamilton-Dirac system \\ $(D^{\sharp},E_{L_{\sharp}})$ defined in \ref{Hamilton-Dirac-System}. Its equations of motion, namely
\[
\lp\dot x(t),dE_{L_{\sharp}}(x(t))\rp\in D^{\sharp}(x(t)),
\]
are equivalent to \eqref{EoM}.
\end{proposition}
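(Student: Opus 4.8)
The plan is to unwind the Dirac-theoretic statement into coordinates and match it against the already-established equations \eqref{eqconp}. By definition $D^{\sharp}(x)=\mbox{graph}\,\lp\Omsh\rp^{\flat}\big|_x$, so the membership condition $\lp\dot x(t),dE_{L_{\sharp}}(x(t))\rp\in D^{\sharp}(x(t))$ is precisely the statement that $dE_{L_{\sharp}}(x)=\lp\Omsh\rp^{\flat}(\dot x)$, i.e. $\mathbf{i}_{\dot x}\Omega^{\sharp}(x)=dE_{L_{\sharp}}(x)$. First I would observe that this is exactly the intrinsic equation appearing in proposition \ref{propoInt}. Since the local matrix form \eqref{OmegaSharp} of $\Omsh$ on $\PSB$ and the differential \eqref{dELsharp} of $E_{L_{\sharp}}$ have already been computed there, the core of the proof amounts to re-invoking those two coordinate expressions and equating them.

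The key steps, in order, would be: (1) write out $D^{\sharp}$ explicitly as the graph, noting that on the degenerate $v$-direction the presymplectic form $\Omsh$ (being the pullback $\lp\mbox{pr}_{\Fcalsh}\rp^*\Omsh$) has a zero row and column, so the $dv$-component of $dE_{L_{\sharp}}$ must vanish, which forces the primary constraint $p=\der L_{\sharp}/\der v$; (2) read off the $dq$, $dp$ and $dz$ components by contracting $\dot x$ with \eqref{OmegaSharp}, exactly as done in the second part of the proof of proposition \ref{propoInt}; (3) equate these with the corresponding components of $dE_{L_{\sharp}}$ from \eqref{dELsharp}, thereby recovering the five relations in \eqref{eqconp}, namely $\dot q=v$, the momentum constraint, the $\dot p_i$ equation and the two $\dot z$ equations. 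Since \eqref{eqconp} was already shown to coincide with \eqref{EoMDar}, and \eqref{EoMDar} is the Darboux form of \eqref{EoM}, the equivalence with \eqref{EoM} follows immediately through remark \ref{remarc}.

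I do not expect a genuine obstacle here, since all the heavy computation has been front-loaded into propositions \ref{propoVar} and \ref{propoInt}; the present statement is essentially a translation lemma showing that the Dirac condition and the intrinsic presymplectic condition are the same object. The only point requiring mild care is the handling of the degenerate directions: one must verify that graph of the \emph{presymplectic} (rather than symplectic) flat map \lp\Omsh\rp^{\flat} still yields a genuine Dirac structure, but this is already guaranteed by the remark following proposition \ref{DProof}, where it is noted that skew-symmetry alone (not non-degeneracy) suffices and that the construction applies with $\Delta_M=TM$. Thus the cleanest route is simply to state that $\lp\dot x,dE_{L_{\sharp}}\rp\in D^{\sharp}$ is equivalent, by the graph characterization of $D^{\sharp}$, to the intrinsic equation of proposition \ref{propoInt}, and then cite that proposition's conclusion that the intrinsic equation is equivalent to \eqref{EoM}.
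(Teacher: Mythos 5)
Your proposal is correct and matches the paper's own proof: the paper likewise writes $D^{\sharp}$ locally as the graph of $\lp\Omsh\rp^{\flat}$ using \eqref{OmegaSharp}, sets the covector component equal to $dE_{L_{\sharp}}$ via \eqref{dELsharp}, and recovers \eqref{eqconp}, hence \eqref{EoM}. Your additional observation that the graph condition is literally the intrinsic equation of proposition \ref{propoInt}, so that one may simply cite that proposition, is a harmless streamlining of the same argument, since both rest on exactly the two coordinate formulas already computed there.
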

\begin{proof}
To prove this, we provide the local expression of $D^{\sharp}=\mbox{graph}\,\lp\Omsh\rp^{\flat}$, which is obtained by considering the local form of $\Omsh$ \eqref{OmegaSharp}. Namely
\[
\begin{split}
D^{\sharp}(x)=\{\lp(\dot q,\dot v, \dot p,\dot z),(\alpha,\beta,u,\mu)\rp\,|&\,\dot q^j\bra\der_iA_j-\der_jA_i,\Phi\ket-\dot p_i+\dot z^{\alpha}\bra A_i,\der_{\alpha}\Phi\ket=\alpha_i,\\ 
&0=\beta_i,\,\dot q^i=u^i,\,-\dot q^i\bra A_i,\der_{\alpha}\Phi\ket+\Omega_{\alpha\beta}\dot z^{\beta}=\mu_{\alpha}\},
\end{split}
\]
where $\alpha_idq^i+\beta_idv^i+u^idp_i+\mu_{\alpha}dz^{\alpha}\in T^*(\PSB)$. When we set $(\alpha,\beta,u,\mu)=dE_{L_{\sharp}}$, which is accomplished by taking into account the local expression \eqref{dELsharp}, we obtain the equations of motion of the Pontryagin-Sternberg Hamilton-Dirac, equations which are obviously equivalent to \eqref{eqconp}, as claimed.
\end{proof}
\begin{remark}
{\rm
The definition of the extended Lagrangian $\mathcal{L}_{\sharp}$ \eqref{ExLag} is crucial in propositions \ref{propoVar} and \ref{propoInt}, where we construct the variational principle and its intrinsic expression in $\PSB$. Despite the particular form of $\mathcal{L}_{\sharp}$ is highly influenced by the Sternberg symplectic structure and therefore {\it quite} natural  (note that $\mathcal{L}_{\sharp}=L_{\sharp}+\bra\Theta^{\sharp},(\dot q,\dot z)\ket-\bra\Theta_{T^*Q},\dot q\ket$, where $\Theta^{\sharp}$ is defined in \eqref{One-Form}), it is completely unnecessary from the Hamilton-Dirac point of view. In fact, we only need $L_{\sharp}$ in order to construct the generalized energy $E_{L_{\sharp}}$, function which forms the Hamilton-Dirac system $(D^{\sharp},E_{L_{\sharp}})$. The Sternberg symplectic structure is only {\it present} in the definition of the Dirac structure $D^{\sharp}$, and consequently in the dynamical condition $\lp\dot x,dE_{L_{\sharp}}(x)\rp\in D^{\sharp}(x)$. In other words, the symplectic structure influences the geometry of the space under study, but it does {\it not} influence its dynamical function, following somehow the Sternberg's {\it program} sketched in remark \ref{MinimalCoupling}.
}
\end{remark}
\medskip

We enclose the results obtained in this section in our main theorem:
\begin{theorem}\label{Teoremaco}
The following statements are equivalent:
\begin{enumerate}
\item The Sternberg-Hamilton-Pontryagin principle for the following action integral
\[
\int_{t_1}^{t_2}\left[\bra p(t),\dot q(t)\ket-E_{\mathcal{L}_{\sharp}}(x(t),\dot x(t))\right]\,dt,
\]
holds for $(q(t),z(t))$ with fixed endpoints.

\item The curve $x(t)=(q(t),v(t),p(t),z(t))\in\PSB$, $t\in[t_1,t_2]$, satisfies the implicit equations
\[
\mathbf{i}_{\dot x(t)}\Omega^{\sharp}(x(t))=dE_{L_{\sharp}}(x(t)),
\]
whose local expression is
\[
\begin{split}
\dot q=&v,\\
p=&\frac{\der L_{\sharp}}{\der v},\\
\dot p_i=&\frac{\der L_{\sharp}}{\der q^i}+\dot q^j\bra\frac{\der A_i}{\der q^j}-\frac{\der A_j}{\der q^i},\Phi\ket+\bra A_i,\dot z^{\alpha}\der_{\alpha}\Phi\ket,\\
\dot z^{\alpha}=&\Omega^{\alpha\beta}\lp\frac{\der L_{\sharp}}{\der z^{\beta}}-\bra\dot q^iA_i,\der_{\beta}\Phi\ket\rp.
\end{split}
\]
\item The curve $x(t)=(q(t),v(t),p(t),z(t))\in\PSB$, $t\in[t_1,t_2]$, is a solution of the Pontryagin-Sternberg Hamilton-Dirac system $(D^{\sharp},E_{L_{\sharp}})$, whose equations of motion are
\[
\lp\dot x(t),dE_{L_{\sharp}}(x(t))\rp\in D^{\sharp}(x(t)).
\]
\end{enumerate}
\end{theorem}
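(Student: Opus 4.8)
The plan is to establish the three-way equivalence by transitivity through a common local system: each of the three statements has already been shown, in the corresponding proposition of this section, to be equivalent to the single set of local equations \eqref{EoM}--\eqref{EoMDar} (equivalently \eqref{eqconp}). Once all three are seen to reduce to this identical system, the asserted equivalence of (1), (2) and (3) follows immediately.

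First I would identify statement (1) with the content of Proposition \ref{propoVar}. The action integral displayed in (1), written through $E_{\mathcal{L}_{\sharp}}$, is precisely the second form of \eqref{AcFunc}; hence by Proposition \ref{propoVar} its stationary condition, with the endpoints of $(q(t),z(t))\in\Fcal$ held fixed while $v(t)$ and $p(t)$ are free, singles out exactly \eqref{eqconp}, i.e. \eqref{EoMDar}. Next, statement (2) is handled by Proposition \ref{propoInt}, where the intrinsic condition $\mathbf{i}_{\dot x}\Omsh=dE_{L_{\sharp}}$ was shown to be equivalent to \eqref{eqconp}. It then only remains to verify that the local expression written out in (2) is a faithful transcription of \eqref{eqconp}: the first three lines coincide verbatim, while the compact line $\dot z^{\alpha}=\Omega^{\alpha\beta}(\der L_{\sharp}/\der z^{\beta}-\bra\dot q^iA_i,\der_{\beta}\Phi\ket)$ is simply the recombination of the two Darboux lines for $\dot z^a,\dot z^{\bar a}$, as recorded in Remark \ref{remarc}. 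Finally, statement (3) is Proposition \ref{Lag-Dirac} itself, in which the Dirac condition $(\dot x,dE_{L_{\sharp}})\in D^{\sharp}$ was shown to be equivalent to \eqref{eqconp}.

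The only genuine obstacle is bookkeeping rather than conceptual, and it concerns the apparent mismatch between the extended Lagrangian $\mathcal{L}_{\sharp}$ (hence $E_{\mathcal{L}_{\sharp}}$) appearing in the variational principle (1) and the plain $L_{\sharp}$ (hence $E_{L_{\sharp}}$) appearing in (2) and (3). The reconciliation is supplied by the relation $\mathcal{L}_{\sharp}=L_{\sharp}+\bra\Theta^{\sharp},(\dot q,\dot z)\ket-\bra\Theta_{T^*Q},\dot q\ket$ together with $\Omsh=-d\Theta^{\sharp}$: the extra $\Theta^{\sharp}$-dependent terms produced by the variation of $\mathcal{L}_{\sharp}$ in (1) are exactly the contributions encoded by the presymplectic form $\Omsh$ in the intrinsic condition (2) and, equivalently, by the graph $D^{\sharp}=\mbox{graph}(\Omsh)^{\flat}$ in (3). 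Confirming that these three routes terminate at the identical system \eqref{eqconp} completes the transitive argument and hence proves the theorem.
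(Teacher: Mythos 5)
Your proposal is correct and follows essentially the same route as the paper: Theorem \ref{Teoremaco} is stated there precisely as an enclosure of Propositions \ref{propoVar}, \ref{propoInt} and \ref{Lag-Dirac}, each of which reduces its statement to the common local system \eqref{eqconp}, so the equivalence holds by transitivity exactly as you argue. Your additional reconciliation of $E_{\mathcal{L}_{\sharp}}$ versus $E_{L_{\sharp}}$ via $\mathcal{L}_{\sharp}=L_{\sharp}+\bra\Theta^{\sharp},(\dot q,\dot z)\ket-\bra\Theta_{T^*Q},\dot q\ket$ matches the paper's own treatment (the equality of the action functionals \eqref{AcFunc} and \eqref{AcFuncCorr} established around Proposition \ref{propoInt}).
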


\section{Example}\label{Ex}

As mentioned in the introduction, the paradigmatic example in classical physics of a charged particle subject to a gauge field is an electric charged particle evolving in space and coupled to an electromagnetic field (other interesting examples as the Wong's equations or the magnetized Kepler problems may be found in \cite{Wong} and \cite{Meng0} respectively). We shall consider the autonomous case, i.e. the electromagnetic field does not depend on time, and denote $\EF:=\lc E^i\rc$, $\BF:=\lc B^i\rc$, using the vector notation of physics literature, the electric and magnetic fields, respectively, in the three space coordinates corresponding to $Q=\R^3$ (with local coordinates $\lc q^i\rc=\lc x,y,z\rc$). The textbook equations of motion of a charged particle (charge=$e$ and unit mass $m=1$) coupled to an electromagnetic field $(\EF,\BF)$ are:
\begin{subequations}\label{EqEx}
\begin{align}
\ddot{\mathbf{q}}=&\,\,e\,\left[\EF+\frac{\dot{\mathbf{q}}}{c}\times\BF\right],\label{EqExa}\\
\frac{d}{dt}\mathcal{E}=&\,\,e\,\dot{\mathbf{q}}\cdot\EF,\label{EqExb}
\end{align}
\end{subequations}
where $\mathbf{q}:=\lc q^i\rc$, $\times$ denotes the curl operation , $\cdot$ the scalar product in $\R^3$, $\En$ the energy of the particle and $c$ is the speed of light. Moreover, as it is well-known, both fields may be obtained from the so-called scalar and vector potentials, $\varphi$ and $\Apo$ respectively, by
\begin{equation}\label{Potentials}
\EF=-\mathbf{\nabla}\,\varphi\quad\quad\mbox{and}\quad\quad\BF=\mathbf{\nabla}\times\Apo.
\end{equation}
In the context of this work, the equation \eqref{EqExa} can be obtained by taking into account the following setup: $Q=\R^3$, $G=U(1)$ is the one-dimensional unitary group, $F$ is a coadjoint orbit of $G$ (consequently a point $-e\in\R$) with $\Phi$ the inclusion map. Needless to say, the connection $\Theta$ is determined locally by the vector potential $\Apo$; furthermore $L_{\sharp}=\frac{1}{2}\dot{\mathbf{q}}\cdot \dot{\mathbf{q}}-e\,\varphi(q)$. In this case, the first equation in \eqref{EoM}, i.e. $\dot z^{\alpha}=\Omega^{\alpha\beta}\lp\frac{\der L_{\sharp}}{\der z^{\beta}}-\bra\dot q^kA_k,\,\frac{\der\Phi}{\der z^{\beta}}\ket\rp$, leads to $0=0$, while the second reads 
\[
\ddot q^i=-e\,\der_i\varphi-e\,\dot q^j\lp\der_iA_j-\der_jA_i\rp,
\]
which according to \eqref{Potentials} is nothing but equation \eqref{EqExa}.

In the context of special relativity theory, both equations \eqref{EqEx} can be elegantly enclosed in the same condition by redefining the configuration manifold as the Minkowski space-time $Q=\R^{(1,3)}$, this is $\R^4$ endowed with a flat pseudo-Riemannian metric of Lorentz signature $(-,+,+,+)$. The rest of the setup remains the same, this is $G=U(1)$, $F=\{-e\}$ and $\Phi$ the inclusion map. In this new case, we establish the coordinates $q^{\mu}=(ct,q^i)$ for the configuration manifold (where $c$ is the speed of light), while the momentum  $T^*Q$ is determined locally by $p_{\mu}=(\En/c,p_i)$. We fix the connection by the local expression $A_{\mu}=(\varphi/c,A_i)$ (where the components are the potentials in \eqref{Potentials}) and the new Lagrangian function reads
\begin{equation}\label{LagEx}
L_{\sharp}=\frac{1}{2}\eta(w_q,w_q)=\frac{1}{2}\eta_{\mu\nu}\,\frac{dq^{\mu}}{d\tau}\,\frac{dq^{\nu}}{d\tau},
\end{equation}
where $\tau$ is re-scaling of the usual time $t$ by $c$ (in the following we will set $c=1$ for simplicity), $w_q=\frac{dq^{\mu}}{d\tau}\frac{\der}{\der q^{\mu}}\in T_qQ$ and $\eta:TQ\otimes TQ\Flder\R$ is the pseudo-Riemannian metric with local form $\eta(\der/\der q^{\mu},\der/\der q^{\nu})=\eta_{\mu\nu}=\mbox{diag}\,(-,+,+,+)$. To fix the notation, we shall denote $\frac{dq^{\mu}}{d\tau}=\dot q^{\mu}=(1,\dot q^i)$ and therefore  $L_{\sharp}=\frac{1}{2}\eta_{\mu\nu}\,\dot q^{\mu}\,\dot q^{\nu}$; besides $p_{\mu}=\eta_{\mu\nu}\dot q^{\nu}$ since the metric provides us with an isomorphism between $TQ$ and $T^*Q$. In this new setup the first equation in \eqref{EoM} is again $0=0$ while the second reads
\begin{equation}\label{momenta}
\frac{d}{d\tau}p_{\mu}=\,-e\,\dot q^{\nu}\lp\frac{\der A_{\mu}}{\der q^{\nu}}-\frac{\der A_{\nu}}{\der q^{\mu}}\rp.
\end{equation}
Recalling that $A_{\mu}$ is independent of time and that $q^0=t$, this equation may be decomposed as
\[
\begin{split}
\dot p_{i}=&\,-e\,\dot q^{\nu}\lp\frac{\der A_{i}}{\der q^{\nu}}-\frac{\der A_{\nu}}{\der q^{i}}\rp=\,e\,\dot q^{0}\lp\frac{\der A_{i}}{\der q^{0}}-\frac{\der A_{0}}{\der q^{i}}\rp-e\,\dot q^{j}\lp\frac{\der A_{i}}{\der q^{j}}-\frac{\der A_{j}}{\der q^{i}}\rp\\
=&\,-e\,\dot q^{0}\frac{\der \varphi}{\der q^{i}}-e\,\dot q^{j}\lp\frac{\der A_{i}}{\der q^{j}}-\frac{\der A_{j}}{\der q^{i}}\rp,\\\\
\dot p_{0}=&\,-e\,\dot q^{\nu}\lp\frac{\der A_{0}}{\der q^{\nu}}-\frac{\der A_{\nu}}{\der q^{0}}\rp=\,e\,\dot q^{0}\frac{\der A_{0}}{\der q^{0}}-\,e\,\dot q^{i}\frac{\der A_{0}}{\der q^{i}}=\,-e\,\dot q^{i}\frac{\der\varphi}{\der q^{i}},
\end{split}
\]
from which, considering that $p_0=\En$ and $p_i=\delta_{ij}\dot q^j$ and taking into account equation \eqref{Potentials}, we recover the equations \eqref{EqEx}, i.e.
\[
\delta_{ij}\ddot q^j=e\,\delta_{ij}E^j+e\,\epsilon_{ijk}\dot q^jB^k\quad\quad\mbox{and}\quad\quad\dot{\mathcal{E}}=e\,\dot{\mathbf{q}}\cdot\EF,
\]
where $\epsilon_{ijk}$ is the Levi-Civita tensor.
\medskip 

Now, we employ the approach developed in this work to reobtain these equations. First, consider the Lagrangian function \eqref{LagEx}, which in the space $\Fcalshd$ and its coordinates $(q^{\mu},v^{\mu},z^{\alpha})$ is redefined by $L_{\sharp}=\frac{1}{2}\eta_{\mu\nu}v^{\mu}v^{\nu}$. Therefore, the equations of motion obtained from the Hamilton-Sternberg-Pontryagin principle \eqref{eqconp} read in this case $\frac{dq^{\mu}}{d\tau}=\dot q^{\mu}=v^{\mu}$, $p_{\mu}=\frac{\der L_{\sharp}}{\der v^{\mu}}=\eta_{\mu\nu}v^{\nu}$ and $\dot p_{\mu}=\,-e\,\dot q^{\nu}\lp\frac{\der A_{\mu}}{\der q^{\nu}}-\frac{\der A_{\nu}}{\der q^{\mu}}\rp$; thus we recover \eqref{momenta}. On the other hand, regarding the Dirac structure $D^{\sharp}$ and the Hamilton-Dirac system $(D^{\sharp},E_{L_{\sharp}})$, the generalized Energy $E_{L_{\sharp}}:\PSB\Flder\R$ \eqref{GenEner} reads
\[
E_{L_{\sharp}}=\bra p,v\ket-L_{\sharp}(q,v,z)=p_{\mu}v^{\mu}-\frac{1}{2}\eta_{\mu\nu}v^{\mu}v^{\nu}.
\]
Taking into account the two-form $\Omsh$, the equations of motion of the Pontryagin-Sternberg Hamilton-Dirac system defined in proposition \ref{Lag-Dirac} are written as
\[
\lp
\begin{array}{ccc}
\dot q^{\nu}&
\dot v^{\nu}&
\dot p_{\nu}
\end{array}
\rp\lp
\begin{array}{ccc}
e(\der_{\nu}A_{\mu}-\der_{\mu}A_{\nu}) & 0&\delta_{\nu}^{\mu}\\
0&0&0\\
-\delta_{\mu}^{\nu}& 0&0
\end{array}
\rp\,=
\lp
\begin{array}{ccc}
0 &
p_{\mu}-\eta_{\mu\nu}v^{\nu}&
v^{\mu}
\end{array}
\rp,
\]
which after a straightforward computation leads to \eqref{momenta}.

\section{Conclusions}

In this paper, we have explored the construction of Hamilton-Dirac structures in the defined Pontryagin-Sternberg bundle, which we show is the suitable space to obtain, from different points of view, the equations of motion for charged particles in gauge fields. We apply the theory to a charged particle coupled to an electromagnetic field, field represented by a connection in a $U(1)$ principal bundle. However, our setting is general enough to cover also non-abelian groups. Our beginning point is the symplectic Sternberg phase space $(\Fcalsh,\Omsh)$, upon which we have constructed an analogue of the Pontryagin bundle $TQ\oplus T^*Q$, that we have named the Sternberg-Pontryagin bundle $\Fcalsh\oplus\Fcalshd$. We have related this bundle to the magnetized Tulczyjew triple \cite{Meng} analogously to how the Pontryagin bundle is related to the usual Tulczyjew triple. Then, we have shown that this is the suitable space to derive the equations of motion of particles in gauge fields from variational and intrinsic points of view (in the Lagrangian side). Moreover, we also show that it is necessary to define a (degenerate) extended Lagrangian function when deriving the equations in these contexts, extended Lagrangian which is highly influenced by the geometry of the Sternberg phase space. On the other hand, we have employed the Dirac structures theory to induce a Hamilton-Dirac system on $\Fcalsh\oplus\Fcalshd$ whose dynamical equations are equivalent to the equations under study. We have proved that this Dirac space generates naturally the desired dynamics and, furthermore, the needed Lagrangian function (which can be also degenerate) is simpler than the extended one proposed previously, i.e. it does not need to be {\it extended}.

\medskip

{\bf Acknowledgements:} I would like to thank Hiroaki Yoshimura for introducing me to Hamilton-Dirac systems, and Carlos Navarrete-Benlloch for reading part of this manuscript. I also thank the referee for valuable comments and corrections.

\end{document}